\newtheorem{theorem}{Theorem}
\newtheorem{proposition}{Proposition}
\newtheorem{lemma}{Lemma}
\newtheorem{definition}{Definition}
\def\BibTeX{{\rm B\kern-.05em{\sc i\kern-.025em b}\kern-.08em
    T\kern-.1667em\lower.7ex\hbox{E}\kern-.125emX}}
\begin{document}

\title{Power Allocation for the Base Matrix of Spatially Coupled Sparse Regression Codes
}

\author{Nian Guo, Shansuo Liang, Wei Han\thanks{Nian Guo, Shansuo Liang, and Wei Han are with Theory Lab, Central Research Institute, 2012 Labs, Huawei Technologies Co. LTD., Hong Kong SAR, China. E-mail: \{guonian4, liang.shansuo, harvey.huawei\}@huawei.com. }
}

\maketitle

\begin{abstract}
We investigate power allocation for the base matrix of a spatially coupled sparse regression code (SC-SPARC) for reliable communications over an additive white Gaussian noise channel. A conventional SC-SPARC allocates power uniformly to the non-zero entries of its base matrix. Yet, to achieve the channel capacity with uniform power allocation, the coupling width and the coupling length of the base matrix must satisfy regularity conditions and tend to infinity as the rate approaches the capacity. For a base matrix with a pair of finite and arbitrarily chosen coupling width and coupling length, we propose a novel power allocation policy, termed V-power allocation. V-power allocation puts more power to the outer columns of the base matrix to jumpstart the decoding process and less power to the inner columns, resembling the shape of the letter V. We show that V-power allocation outperforms uniform power allocation since it ensures successful decoding for a wider range of signal-to-noise ratios given a code rate in the limit of large blocklength. In the finite blocklength regime, we show by simulations that power allocations imitating the shape of the letter V improve the error performance of a SC-SPARC.

\end{abstract}


\section{Introduction}
For reliable communications over an additive white Gaussian noise (AWGN) channel, Joseph and Barron \cite{Joseph12} designed the sparse regression code (SPARC). It forms a codeword by multiplying a design matrix by a sparse message. The message is sparse as it is segmented into several sections and each section contains only one non-zero entry. The codeword is passed through an AWGN channel subject to an average power constraint. With uniform power allocation across the non-zero entries of a message and a maximum likelihood decoder, a SPARC asymptotically achieves the channel capacity of the AWGN channel~\cite{Joseph12}. To overcome the complexity barrier of the maximum likelihood decoder, the approximate message passing (AMP) decoder with polynomial complexity has been proposed \cite{Barbier14}--\hspace{1sp}\cite{Barbier17}. Its decoding error is closely tracked by the state evolution (SE) and it outperforms other low-complexity decoders \cite{Joseph14}\hspace{1sp}\cite{Cho13} in terms of the finite-blocklength error rates. By judiciously allocating power to the non-zero entries of a sparse message, SPARCs with AMP decoding continue to achieve the channel capacity \cite{Rush17}. For example, iterative power allocation \cite{Ramji19} uses the asymptotic SE of the AMP decoder to decide the power allocation for a message section by section.

By introducing a spatial coupling structure to the design matrix, SC-SPARCs with AMP decoding not only achieve the channel capacity \cite{Barbier16}\hspace{1sp}\cite{Rush21} but also display a better error performance compared to power-allocated SPARCs \cite{Barbier17}\hspace{1sp}\cite{Hsieh18}. Similar to the graph-lifting of SC-LDPC codes \cite{Kudekar}\hspace{1sp}\cite{Mitchell}, the design matrix of a SC-SPARC is constructed from a base matrix. Each entry of the base matrix is expanded as a Gaussian submatrix in the design matrix, and the variance of the Gaussian entry is determined by the corresponding entry in the base matrix. The coupling structure of the base matrix is determined by a coupling pair comprising a coupling width and a coupling length.

Existing works on SC-SPARCs commonly assumed that the power is uniformly allocated to the non-zero entries of the message as well as the base matrix, e.g., \cite{Ramji19}\hspace{1sp}\cite{Rush21}\hspace{1sp}\cite{Hsieh18}\hspace{1sp}\cite{Rush19}. For such uniform power allocation (UPA), a decoding phenomenon termed \emph{sliding window} is observed \cite{Rush21}\hspace{1sp}\cite{Rush19}, namely, the decoding propagates from two sides to the middle of a message in a symmetric fashion. Once the outer parts of a message are successfully decoded, they act as perfect side information that facilitates the decoding of the inner parts of the message. This phenomenon is used as a decoding techinque termed \emph{seed} to boost the decoding performance of SC-SPARCs \cite{Barbier16}.

While UPA is sufficient for a SC-SPARC with AMP decoding to achieve the channel capacity, the coupling pair of the base matrix must satisfy regularity conditions and tend to infinity as the rate approaches the channel capacity \cite{Barbier16}\hspace{1sp}\cite{Rush21}. Yet, in practical implementations, the coupling pair is finite and arbitrary. Given a finite coupling pair, it has been observed that UPA might be inefficient and causes AMP decoding failure. Thus, it is of practical interest to design a power allocation policy for a base matrix with a finite coupling pair to ensure successful decoding for a wide range of power and code rates.

We propose a novel power allocation policy--V-power allocation (VPA)--for the base matrix of a SC-SPARC with AMP decoding. Its power allocation is non-increasing from the outer columns to the middle column of the base matrix, resembling the shape of the letter V. Similar to iterative power allocation \cite{Ramji19}, VPA leverages the asymptotic SE of the AMP decoder to tell whether a SC-SPARC ensures successful decoding in the limit of large blocklength. Dissimilar to conventional power allocation policies that vary the non-zero coefficients of a message, VPA only varies the non-zero entries of a base matrix. To measure the performance of a power allocation policy for the base matrix, we define a power-rate function (PRF). Given a finite coupling pair, a channel noise variance, and a rate, the PRF quantifies the minimum power so that a SC-SPARC with a power allocation policy ensures successful decoding for all power above it. We derive the PRFs for UPA and VPA, respectively, and we show that VPA outperforms UPA in terms of the PRF, meaning that VPA ensures successful decoding for a larger range of power. While VPA is designed in the infinite blocklength regime, we use simulations to show that a VPA-like power allocation improves the finite-blocklength block error rates of a SC-SPARC.


\emph{Notations:} For a positive integer $n$, we denote $[n]\triangleq \{1,2,\dots,n\}$. For a matrix $\mathsf W$, we denote by $\mathsf W_{rc}$ the entry at the $r$-th row and the $c$-th column. For a sequence $a_1,a_2,\dots$, we denote $\{a_i\}_{i=p}^{q}\triangleq \{a_p, a_{p+1},\dots,a_q\}$.

\section{Spatially coupled sparse regression codes}\label{Sec_sparc}
\subsection{Encoder}\label{Enc}
The encoder of a SC-SPARC forms a codeword $\bm x\in\mathbb R^n$ by multiplying a message vector $\bm \beta\in\mathbb R^{ML}$ by a design matrix $\mathsf A\in\mathbb R^{n\times ML}$,
\begin{align}\label{encoder}
\bm x = \mathsf A\bm \beta,
\end{align}
and the codeword is subject to an average power constraint
\begin{align}\label{X_power_constraint}
\frac{1}{n}\mathbb E[||\bm x||^2] = P.
\end{align} 

The message $\bm \beta$ is a sparse vector of length $ML$. It consists of $L$ length-$M$ sections. In each section $\ell=1,2,\dots, L$, there is only one non-zero entry, whose value is set a priori. Since the information is carried only by the indices of the non-zero entries, the alphabet size of $\bm \beta$ is $M^L$. As we will vary the variances of the entries of design matrix $\mathsf A$ by varying the power allocation for the base matrix, we set   all the non-zero coefficients of $\bm \beta$ to $1$ without loss of generality. 

The design matrix $\mathsf A$, as shown in Fig.~\ref{Fig_encoder}, is constructed from a base matrix $\mathsf W$.
The base matrix serves as a protograph for the design matrix. Each entry $\mathsf W_{rc}$ of base matrix $\mathsf W$ is expanded as an $M_R\times M_C$ submatrix of design matrix $\mathsf A$, whose entries are i.i.d. Gaussian random variables $\mathcal N\left(0,\frac{1}{L}\mathsf W_{rc}\right)$. A column block in $\mathsf A$ corresponds to a set of $M_C$ columns that are expanded from one column in $\mathsf W$. A row block in $\mathsf A$ corresponds to a set of $M_R$ rows that are expanded from one row in $\mathsf W$. The design matrix $\mathsf A$ contains $L_C$ columns blocks and $L_R$ row blocks. It holds that $M_CL_C = ML$, $n = M_RL_R$. 

The rate of a SC-SPARC is defined as
\begin{align}
R = \frac{L\log M}{n}~\text{(nats per channel use)}.
\end{align}

In this work, we focus on a class of band-diagonal base matrices defined below, which is introduced in \cite{Rush21}. We denote by $\omega$ and $\Lambda$ the coupling width and the coupling length of the base matrix, respectively.   
\begin{definition}\label{def_base_matrix}
An $\left(\omega, \Lambda, P\right)$ base matrix $\mathsf W$ is specified by the following properties.
\begin{itemize}
\item[i)] The base matrix $\mathsf W$ is of size  $L_R\times L_C$, where $L_C \triangleq \Lambda$, $L_R \triangleq \omega + \Lambda - 1$, $\Lambda\geq 2\omega -1$;
\item[ii)]  Given any column $c\in[L_C]$, the non-zero entries are only at rows $c \leq r\leq c+\omega - 1$;
\item[iii)] The entries of $\mathsf W$ satisfy the average power constraint~\eqref{X_power_constraint},
\begin{align}\label{power_constraint}
\frac{1}{L_RL_C}\sum_{r=1}^{L_R}\sum_{c=1}^{L_C}\mathsf W_{rc} = P.
\end{align}
\end{itemize}
\end{definition}

\begin{figure}
\includegraphics[scale=0.45]{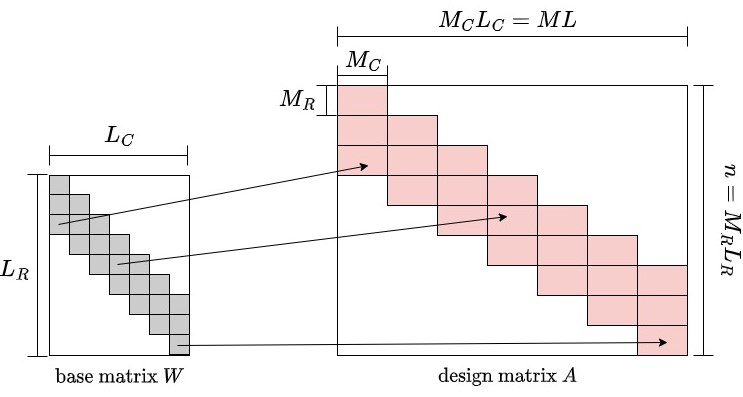}
\caption{Base matrix and design matrix of a SC-SPARC. The base matrix has coupling width $\omega=3$ and coupling length $\Lambda = 7$.}
\label{Fig_encoder}
\end{figure}
\subsection{Decoder}\label{Dec}
The codeword $\bm x$ \eqref{encoder} is transmitted through an AWGN channel yielding $\bm y = \bm x + \bm w$,
where $\bm w$ is a vector of $n$ i.i.d. Gaussian random variables each with zero mean and variance $\sigma^2$. 
The AMP decoder iteratively estimates the message $\bm \beta$ from the channel output $\bm y\in\mathbb R^n$ as follows \cite[Section III]{Rush21}. At iteration $t=0$, the AMP decoder initializes the estiamte of $\bm\beta$ as $\bm\beta^0 = \bm 0$ and initilizes two vectors $\bm v^0 =\bm 0$, $\bm z^{-1} = \bm 0$. At iterations $t=1,2,\dots$, the AMP decoder calculates the estimate $\bm \beta^t$ as
\begin{align}
&\bm z^{t} = \bm y - \mathsf A\bm \beta^{t} + \bm v^{t} \otimes \bm z^{t-1},\\
&\bm \beta^{t+1} = \eta_{t}\left(\bm \beta^{t} + (\mathsf S^{t}\otimes \mathsf A)^{*}\bm z^{t}\right), 
\end{align}
where $\otimes$ denotes the entry-wise product; function $\eta_{t}$ is the minimum mean square error estimator for $\bm \beta$;
vector $\bm v^t$ and matrix $\mathsf S^t$ are determined by the SE parameters. In the asymptotic regime $M\rightarrow\infty$, the SE parameters  \cite[(23)--(24)]{Rush21} at iterations $t=0,1,\dots$ are given by  
\begin{subequations}\label{asymptotic_SE}
\begin{align}\label{phi_SE}
&\phi_r^t  = \sigma^2 + \frac{1}{L_C}\sum_{c=1}^{L_C}\mathsf W_{rc}\psi_c^t, ~\forall r\in[L_R],\\ \label{psi_SE}
&\psi_c^{t+1}  = 1 - \mathbbm{1}\left\{\frac{1}{RL_R}\sum_{r=1}^{L_R}\frac{\mathsf W_{rc}}{\phi_r^t}>2\right\},~\forall c\in[L_C],
\end{align} 
\end{subequations}
where $\psi_c^0 = 1,~\forall c\in[L_c]$.
The SE parameter $\psi_c^t$ \eqref{psi_SE} closely tracks the normalized mean-square error between the part of message $\bm\beta_c$ and the part of the estimate $\bm \beta_c^t$ corresponding to column block $c$ at iteration $t$, i.e., $\psi_c^t\approx \frac{L_C}{L}||\bm \beta_c - \bm \beta_c^t||^2_2$ for all $c\in[L_C]$. This is evidenced both by the simulations \cite[Fig. 3]{Rush21} and the concentration inequality \cite[Theorem~2]{Rush21}.

\section{Power allocation and performance metrics}\label{Sec_metric}
We define power allocation policies for a base matrix as well as the performance metrics.

For an $(\omega,\Lambda,P)$ base matrix $\mathsf W$ in Definition~\ref{def_base_matrix}, a \emph{power allocation policy} is a mapping $\Pi\colon \mathbb R\rightarrow\mathbb R^{L_R\times L_C}$ that gives a set of non-negaitve values $\Pi(P) = \{\mathsf W_{rc}\}_{r\in[L_R], c\in[L_C]}$ corresponding to the entries of the base matrix. The power allocation policy $\Pi$ for the base matrix does not affect the non-zero coefficients of message $\bm\beta$.

We say that a SC-SPARC \emph{successfully} decodes column block $c$ of the message, i.e., $\bm\beta_c$, if there exists a time $T\in\mathbb Z_+$ such that $\psi_c^T = 0$ \eqref{psi_SE}; we say that a SC-SPARC \emph{successfully} decodes the entire message if there exists a time $T\in\mathbb Z_+$, 
\begin{align}\label{success}
\psi_c^T = 0,~ \forall c\in[L_C].
\end{align}

We use the asymptotic SE parameter $\psi_c^t$ \eqref{psi_SE} to define the performance metrics. The asymptotic SE parameter $\psi_c^t$ is fully determined by the coupling pair $(\omega,\Lambda)$, the noise variance $\sigma^2$, the rate $R$, the power $P$, and the power allocation policy $\Pi$. Fixing the first three parameters, it becomes $\psi_c^t=\psi_c^t(R, P, \Pi)$.

We measure the performance of a power allocation policy using the rate-power function (RPF) and the power-rate function (PRF) defined next. 
\begin{definition}
Fix a finite coupling pair $(\omega,\Lambda)$, a noise variance of the AWGN channel $\sigma^2$, and a power $P$. The RPF $R_{\Pi}(P)$ for power allocation policy $\Pi$ is the largest rate so that for any rate $R<R_{\Pi}(P)$, a SC-SPARC generated by an $(\omega, \Lambda, P)$ base matrix with power allocation $\Pi$ ensures successful decoding, 
\begin{align}\nonumber
R_{\Pi}(P)\triangleq \label{RPF}
\sup\{R^*\colon &\forall R<R^*, \exists T\in\mathbb Z_+,\\
 &\psi_c^T(R,P,\Pi) = 0, \forall c\in[L_C]\}.
\end{align}
Fix a finite coupling pair $(\omega,\Lambda)$, a noise variance of the AWGN channel $\sigma^2$, and a rate $R$. The PRF $P_{\Pi}(R)$ for power allocation policy $\Pi$ is the minimum power so that for any power $P>P_{\Pi}(R)$, a SC-SPARC generated by an $(\omega, \Lambda, P)$ base matrix with power allocation $\Pi$ ensures successful decoding, 
\begin{align}\nonumber
P_{\Pi}(R) \triangleq \inf\{P^*\colon &\forall P>P^*, \exists T\in\mathbb Z_+,\\\label{PRF}
&\psi_c^T(R, P, \Pi)=0, \forall c\in[L_C]\}.
\end{align}  
\end{definition}

We aim to find a power allocation policy $\Pi$ that leads to a large $R_{\Pi}(R)$, or equivalently, a small $P_{\Pi}(R)$. 

\section{Uniform power allocation}\label{Sec_SE}
We say that an $(\omega, \Lambda, P)$ base matrix in Definition~\ref{def_base_matrix} has uniform power allocation (UPA) if
\begin{align}\label{uniform_power}
\mathsf W_{rc} = 
\begin{cases}
P\frac{L_R}{\omega}, & c\leq r\leq c+\omega-1,\\
0,&\text{otherwise}.
\end{cases}
\end{align}
We show the RPF \eqref{RPF} and the PRF \eqref{PRF} for UPA.
\begin{theorem}\label{thm_UPA}
Fix a finite coupling pair $(\omega,\Lambda)$ and an AWGN channel with noise variance $\sigma^2$. The RPF $R_U(P)$ for UPA is given by 
\begin{align}\label{RUPF}
R_U(P) = \frac{L_C}{2L_R}\sum_{r=1}^{\omega}\frac{1}{r+\frac{L_C}{L_R}\frac{\sigma^2}{P}\omega};
\end{align}
the PRF $P_U(R)$ for UPA is given by
\begin{align}\label{PURF}
P_U(R) = 
\begin{cases}
R_U^{-1}(R), & R<\frac{L_C}{2L_R}\sum_{r=1}^{\omega}\frac{1}{r},\\
\infty, & \text{otherwise}, 
\end{cases}
\end{align}
where $R_U^{-1}$ is the inverse function of $R_U$.
\end{theorem}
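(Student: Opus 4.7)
The plan is to analyze the asymptotic SE recursion \eqref{asymptotic_SE} directly under UPA \eqref{uniform_power}, exploiting the band structure to identify which columns decode at each iteration and to reduce full-message decoding to a single edge-determined inequality.

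First I compute the initial $\phi_r^0$ by substituting $\psi_c^0 = 1$ and \eqref{uniform_power} into \eqref{phi_SE}. Writing $\kappa_r \triangleq \min(r, L_R - r + 1, \omega)$ for the number of non-zero entries in row $r$, this gives $\phi_r^0 = \sigma^2 + \kappa_r\, PL_R/(L_C\omega)$. Plugging into the indicator argument in \eqref{psi_SE}, the decoding test for column $c$ at iteration one becomes
\begin{equation*}
\frac{L_C}{RL_R}\sum_{r=c}^{c+\omega-1}\frac{1}{\kappa_r + L_C\omega\sigma^2/(PL_R)} > 2.
\end{equation*}
For $c=1$ (and symmetrically $c=L_C$), the indices $\kappa_r$ sweep through $1,2,\dots,\omega$, yielding exactly $R < R_U(P)$ with $R_U(P)$ as in \eqref{RUPF}. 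For any other $c$, the sum only loses a large term $1/(j+\alpha)$ with $j < \omega$ and gains a smaller term $1/(\omega+\alpha)$, so the test is strictly harder; thus the outer columns are decoded first.

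Next I argue propagation by induction on $t$. Assuming $\psi_c^t = 0$ exactly for $c \in [1,t] \cup [L_C-t+1,L_C]$, I compute $\phi_r^t$ for rows $r = t+j$, $j = 1,\dots,\omega$, by restricting the sum in \eqref{phi_SE} to the surviving undecoded columns in $[t+1, L_C-t]$. A short support count shows exactly $j$ of these columns contribute whenever $L_C - 2t \geq \omega$, and fewer than $j$ once the two decoded blocks are about to meet. In either case $\phi_{t+j}^t \leq \phi_j^0$, so the decoding test for column $t+1$ at time $t+1$ coincides with (or is weaker than) the original edge test. Hence the assumption $R < R_U(P)$ still suffices, by symmetry column $L_C - t$ also decodes, and the induction closes after at most $\lceil L_C/2 \rceil$ iterations.

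Finally, if $R \geq R_U(P)$, even the outer columns fail at $t=1$, so $\psi_c^1 = 1$ for all $c$ and the SE is a fixed point, proving $R_U(P)$ is tight. For the PRF, $R_U(P)$ is continuous and strictly increasing in $P$ with $R_U(P) \to 0$ as $P \to 0^+$ and $R_U(P) \to \frac{L_C}{2L_R}\sum_{r=1}^\omega 1/r$ as $P \to \infty$; inverting on its image yields \eqref{PURF}, with $P_U(R) = \infty$ when $R$ exceeds this limit. The main technical wrinkle should be the endgame of the induction when $L_C - 2t < \omega$: the two decoded blocks interact and the sum has fewer than $\omega$ terms, but since each surviving $\phi_{t+j}^t$ only decreases from $\phi_j^0$, the test only becomes easier, and the argument still closes.
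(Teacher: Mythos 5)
Your proposal is correct and follows essentially the same route as the paper: reduce successful decoding of the whole message to the edge-column test $\psi_1^1=0$ at iteration one (the easiest test under UPA), establish the one-column-per-iteration wave propagation by induction (this is the paper's Proposition~\ref{prop2} with $g=1$, which you reprove inline), and obtain $P_U(R)$ by inverting the strictly increasing, bounded function $R_U(P)$. The only cosmetic point is that your induction hypothesis should read ``at least'' rather than ``exactly'' the columns $[1,t]\cup[L_C-t+1,L_C]$ are decoded, which is harmless since extra decoded columns only decrease the $\phi_r^t$ and you already invoke that monotonicity.
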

\begin{proof}
Appendix~\ref{pf_thm_UPA}.
\end{proof}
We compare $R_U(P)$ \eqref{RUPF} with the channel capacity $C(P) = \frac{1}{2}\log\left(1+\frac{P}{\sigma^2}\right)$ of the AWGN channel with noise variance $\sigma^2$. Using Right-endpoint approximation, we upper bound \eqref{RUPF} as
\begin{align}\label{Rb}
R_U(P) \leq \frac{L_C}{2L_R}\log\left(1+\frac{P}{\sigma^2}\frac{\omega}{\frac{L_C}{L_R}\omega + \frac{P}{\sigma^2}}\right).
\end{align}
The right side of \eqref{Rb} is smaller than $C(P)$ for a finite coupling pair, implying that a SC-SPARC with a finite coupling pair no longer achieves the channel capacity. The gap closes if and only if $\omega, \Lambda\rightarrow\infty$ and $\frac{\omega}{\Lambda}\rightarrow 0$.

For rates $R_U(P)\leq R<C(P)$, a SC-SPARC fails to ensure successful decoding, and the reason is shown in Proposition~\ref{prop2} stated below. We denote the index of the middle column of the base matrix by $\theta \triangleq \left\lceil\frac{\Lambda}{2}\right\rceil$.
\begin{proposition}\label{prop2}
Consider a SC-SPARC generated by an $\left(\omega, \Lambda, P\right)$ base matrix with UPA \eqref{uniform_power}. At iteration $t=1$, if the AMP decoder successfully decodes $2g$ column blocks of the message,
\begin{align}\label{prop2_psic1}
\psi_c^1 =\psi_{\Lambda-c+1}^t = 0,~\forall c\leq g,
\end{align}
for some $0\leq g \leq \omega$,
then at iterations $t=2,3,\dots$, the AMP decoder continues to decode $2g$ column blocks of the message,
\begin{align}\label{psi_gct}
\psi_c^t = \psi_{\Lambda-c+1}^t = 0,~\forall c\leq \min\left\{gt,\theta\right\}.
\end{align}
\end{proposition}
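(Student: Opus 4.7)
The plan is to proceed by induction on $t \geq 1$. The base case $t = 1$ is precisely the hypothesis, noting that $\min\{g, \theta\} = g$ because $g \leq \omega$ and $\theta = \lceil \Lambda/2 \rceil \geq \omega$ by $\Lambda \geq 2\omega - 1$. For the inductive step, assume $\psi_c^t = \psi_{\Lambda - c + 1}^t = 0$ for every $c \leq s_t \triangleq \min\{gt, \theta\}$, and extend this to $s_{t+1} = \min\{g(t+1), \theta\}$. Since the UPA base matrix \eqref{uniform_power} is symmetric about its middle column, the SE recursion \eqref{asymptotic_SE} preserves the left--right symmetry of the $\psi_c^t$'s, so it suffices to grow the left decoded region; the right side follows by symmetry.

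The crux is to compute the variances in \eqref{phi_SE} using the inductive hypothesis. Under UPA, $\phi_r^t = \sigma^2 + \frac{P L_R}{\omega L_C} N_r^t$, where $N_r^t$ counts the currently-undecoded columns in the non-zero support of row $r$. Using that the decoded region at time $t$ is $[1, s_t] \cup [\Lambda - s_t + 1, \Lambda]$, I would prove the key bound $\phi_{s_t + j}^t \leq \phi_j^0$ for every $j \in \{1, \dots, \omega\}$, with equality when $s_t + j \leq \Lambda - s_t$. In the equality case, exactly the $j$ columns $s_t + 1, \dots, s_t + j$ lying in the support of row $r = s_t + j$ remain undecoded, matching the count at iteration $0$ for row $j$; when the right decoded region intersects the support of row $r$, $N_r^t$ is strictly smaller and $\phi_r^t$ is further reduced. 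Rows $r \in (s_t + \omega, \Lambda - s_t]$ deep inside the undecoded interior satisfy $\phi_r^t = \sigma^2 + \frac{P L_R}{L_C} = \phi_\omega^0$.

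With these bounds, the left-hand side of the decoding test in \eqref{psi_SE} for $\psi_{s_t + j_0}^{t+1}$ is at least as large as the corresponding left-hand side at $t = 1$ for $\psi_{j_0}^1$, since each $1/\phi_r^t \geq 1/\phi_j^0$. The hypothesis $\psi_{j_0}^1 = 0$ for every $j_0 \leq g$ therefore transfers, yielding $\psi_{s_t + j_0}^{t+1} = 0$ for every $j_0 \leq g$, so the left decoded region grows by $g$ columns. Monotonicity keeps the previously-decoded columns decoded --- shrinking any $\psi_{c'}^t$ from $1$ to $0$ only decreases each $\phi_r^t$ and thus only strengthens the indicator in \eqref{psi_SE}. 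By symmetry the right grows identically, giving $s_{t+1} = \min\{s_t + g, \theta\}$, and once $s_t + g \geq \theta$ the two decoded halves cover the whole index set.

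The main obstacle I anticipate is the case-heavy bookkeeping behind $\phi_{s_t + j}^t \leq \phi_j^0$. One must split on whether row $r = s_t + j$ has its support truncated on the left by the base-matrix boundary (when $r < \omega$) or by the decoded prefix $[1, s_t]$, and whether its support overlaps the right decoded region $[\Lambda - s_t + 1, \Lambda]$. The happy coincidence that drives the argument is that the left-base-matrix truncation for $r < \omega$ at $t = 0$ is mimicked by the decoded-prefix truncation at iteration $t$, producing the exact identity $\phi_{s_t + j}^t = \phi_j^0$ in the non-collision regime; collision with the right decoded region, which can occur for small $\Lambda$, only strengthens the inequality and hence also supports, rather than obstructs, the propagation.
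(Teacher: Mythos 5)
Your proposal is correct and follows essentially the same route as the paper's proof: induction on $t$, with the decoded prefix of length $s_t=\min\{gt,\theta\}$ acting as side information, and the key observation that the SE test for column $s_t+j_0$ at iteration $t+1$ reduces to the iteration-$1$ test for column $j_0$ (your identity $\phi_{s_t+j}^t=\phi_j^0$ is exactly the paper's change of variables $r\leftarrow r+gt$ in \eqref{psi_d}, and your monotonicity remark is the paper's Lemma~\ref{lemma} item~\ref{item1})). The only cosmetic difference is that you treat the collision with the right decoded region as a strengthening inequality, whereas the paper avoids it by splitting on $t<\theta/g$ versus $t\geq\theta/g$.
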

\begin{proof}
Appendix~\ref{pf_prop2}.
\end{proof}
Proposition~\ref{prop2} states that if $g=0$, the decoder fails to decode even a single column block of the message; otherwise, the entire message is decoded within  $\frac{\theta}{g}$ iterations. Here, it suffices to limit $g\leq \omega$ because $g\geq \omega$ means that the entire message is successfully decoded in the first iteration (Appendix~\ref{pf_implication_assumption}). 
Proposition~\ref{prop2} indicates that a SC-SPARC with UPA fails to decode at $R_U(P)\leq R<C(P)$ because the power \eqref{uniform_power} allocated to columns $1$ and $\Lambda$ of the base matrix is smaller than the power needed to make the event in $\psi_1^1$ \eqref{psi_SE} occur.

\section{V-power allocation}\label{Sec_alg}

\subsection{VPA Algorithm}\label{Sec_alg_1}
Fixing an AWGN channel with noise variance $\sigma^2$ and a rate $R$, we present VPA for an $(\omega,\Lambda,P)$ base matrix.

In the extreme, a power allocation policy can allocate a different power to every non-zero entry of the base matrix $\mathsf W$. The output $\{\mathsf W_{rc}\}_{r\in[L_R], c\in[L_C]}$ of VPA satisfy:
\begin{itemize}
\item[a)] The power does not change with rows, i.e., $\forall c\in[L_C]$,
\begin{align}\label{col_unchange}
\mathsf W_{rc} \triangleq \mathsf W_c,~\forall c\leq r\leq c+\omega - 1;
\end{align}
\item[b)] The power is symmetric about the middle column index,
\begin{align} \label{symmetric}
&\mathsf W_{c} = \mathsf W_{\Lambda - c + 1}, \forall c\in[L_C].
\end{align}
\end{itemize}

We define function $\mathsf f_t\colon\mathbb R^{\theta -t + 1}\rightarrow \mathbb R$, $t= 1,\dots,\theta$  as\footnote{Although the summation in the denominator of the right side of \eqref{def_ft} may include $\mathsf W_{\theta+1},\dots, \mathsf W_{\Lambda - t + 1}$, $\mathsf f_t$ is still a function of variables $\mathsf W_t,\dots, \mathsf W_{\theta}$ only, due to the symmetry assumption \eqref{symmetric}.}
\begin{align}\label{def_ft}
\mathsf f_t\left(\{\mathsf W_i\}_{i=t}^{\theta}\right) 
\triangleq \sum_{r=t}^{t+\omega-1}\frac{\mathsf W_t}{\sigma^2 + \frac{1}{L_c}\sum_{c'=t}^{\min\{r,\Lambda - t + 1\}}\mathsf W_{c'}}.
\end{align}
Let $\{\delta_t\}_{t=1}^{\theta}$ be a sequence of positives chosen arbitrarily. 
\RestyleAlgo{ruled}
\begin{algorithm}
\caption{VPA}
\label{VPA}
\SetAlgoLined
\SetKwInOut{Input}{input}\SetKwInOut{Output}{output}
\Input{$\omega,\Lambda, R, P,\sigma, \{\delta_t\}_{t=1}^{\theta}$}
\Output{$\{\mathsf W_{c}\}_{c\in[L_C]}$}
\For {$t = \theta, \theta-1,\dots, 1$}{
Solve $\mathsf f_t\left(\{\mathsf W_i\}_{i=t}^{\theta}\right)  = 2RL_R$ for $\mathsf W_t$\;
$\mathsf W_t\leftarrow \mathsf W_t + \delta_t$\;
$\mathsf W_{\Lambda- t+1}\leftarrow \mathsf W_t$
}
\If{$\frac{1}{L_RL_c}\sum_{c=1}^{L_c}\omega \mathsf W_{c} > P$}{
Declare failure}
\If{$\frac{1}{L_RL_c}\sum_{c=1}^{L_c} \omega \mathsf W_{c} \leq P$}{
$\text{residual}\leftarrow P - \frac{1}{L_RL_c}\sum_{c=1}^{L_c}\omega \mathsf W_{c}$\;
$\mathsf W_1 \leftarrow \frac{\text{residual}L_RL_c}{2\omega}$\;
$\mathsf W_{\Lambda} \leftarrow \mathsf W_1$
}
\end{algorithm}

Proposition~\ref{prop3}, stated next, shows that VPA follows a shape of V, namely, $\mathsf W_c$ is non-increasing on $1\leq c\leq \theta$ and is non-decreasing on $\theta+1\leq c\leq \Lambda$ by symmetry \eqref{symmetric}. 
\begin{proposition}\label{prop3}
Power allocation $\mathsf W_1^{(V)},\mathsf W_2^{(V)},\dots,\mathsf W_{\theta}^{(V)}$ that ensure $\mathsf f_t = 2RL_R$ (line 2 of Algorithm~\ref{VPA}) for all $t=1,2,\dots, \theta$ are unique and satisfy 
\begin{align}\label{eq_match}
\mathsf W_1^{(V)}\geq \mathsf W_2^{(V)}\geq \dots \geq \mathsf W_\theta^{(V)}.
\end{align}
\end{proposition}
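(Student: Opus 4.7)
The plan is a reverse induction on $t=\theta,\theta-1,\dots,1$ that simultaneously establishes the uniqueness of each $\mathsf W_t^{(V)}$ and the monotonicity \eqref{eq_match}. First I would fix $\mathsf W_{t+1}^{(V)},\dots,\mathsf W_{\theta}^{(V)}$ and argue that the equation $\mathsf f_t=2RL_R$ has a unique positive solution in $\mathsf W_t$. Then, to get the ordering, I would evaluate $\mathsf f_t$ at the trial point $\mathsf W_t=\mathsf W_{t+1}^{(V)}$ and show the resulting value is no larger than $2RL_R$; monotonicity in $\mathsf W_t$ would then force $\mathsf W_t^{(V)}\geq \mathsf W_{t+1}^{(V)}$.

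For the uniqueness step, observe that the index $c'=t$ always lies in the summation range of \eqref{def_ft}, so $\mathsf W_t$ appears in every denominator. A direct term-by-term differentiation yields $\partial \mathsf f_t/\partial \mathsf W_t>0$ on $(0,\infty)$. Since $\mathsf f_t\to 0$ as $\mathsf W_t\to 0^+$ and $\mathsf f_t\to \omega L_c$ as $\mathsf W_t\to\infty$, the intermediate value theorem supplies a unique positive root whenever $2RL_R<\omega L_c$ (otherwise Algorithm~\ref{VPA} would declare failure at this stage).

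For the monotonicity step, fix $t<\theta$ and assume inductively $\mathsf W_{t+1}^{(V)}\geq \mathsf W_{t+2}^{(V)}\geq \cdots \geq \mathsf W_{\theta}^{(V)}$. Re-indexing $r\mapsto r-1$ in the summation defining $\mathsf f_{t+1}$ aligns the two sums term by term, so it suffices to show that for each $r\in\{t,\dots,t+\omega-1\}$, the denominator of the $r$-th term of $\mathsf f_t$ (evaluated at $\mathsf W_t=\mathsf W_{t+1}^{(V)}$) is at least that of the $r$-th term of the re-indexed $\mathsf f_{t+1}$. A short case analysis on whether the truncation at $\Lambda-t+1$ is active does the job. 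When $r\leq \Lambda-t-1$, the denominator difference reduces to $L_c^{-1}\bigl(\mathsf W_{t+1}^{(V)}-\mathsf W_{r+1}^{(V)}\bigr)$; this is nonnegative directly from the induction hypothesis when $r+1\leq \theta$, and when $r+1>\theta$ one invokes the symmetry $\mathsf W_{r+1}^{(V)}=\mathsf W_{\Lambda-r}^{(V)}$, noting that the case bound $r\leq \Lambda-t-1$ precisely places $\Lambda-r$ in $\{t+1,\dots,\theta\}$. The remaining cases $r=\Lambda-t$ and $r\geq \Lambda-t+1$ yield differences equal to $L_c^{-1}\mathsf W_{t+1}^{(V)}$ and $2L_c^{-1}\mathsf W_{t+1}^{(V)}$ respectively (the latter using the symmetry $\mathsf W_{\Lambda-t+1}=\mathsf W_t$), both manifestly nonnegative.

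The main obstacle is the index bookkeeping in the truncated subcase, where symmetry must be lined up to move an index above $\theta$ back into the range controlled by the induction hypothesis; the other subcases are mechanical. Once the three term-wise comparisons are verified, $\mathsf f_t(\mathsf W_{t+1}^{(V)},\mathsf W_{t+1}^{(V)},\dots,\mathsf W_{\theta}^{(V)})\leq \mathsf f_{t+1}(\mathsf W_{t+1}^{(V)},\dots,\mathsf W_{\theta}^{(V)})=2RL_R$, and strict monotonicity of $\mathsf f_t$ in $\mathsf W_t$ delivers $\mathsf W_t^{(V)}\geq \mathsf W_{t+1}^{(V)}$, closing the induction and giving \eqref{eq_match}.
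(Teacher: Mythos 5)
Your proof is correct and follows essentially the same route as the paper's: uniqueness from strict monotonicity of $\mathsf f_t$ in $\mathsf W_t$ (the paper's Lemma~\ref{lemma_ftWt}), and the ordering from the term-wise denominator comparison showing $\mathsf f_t(\mathsf W_{t+1}^{(V)},\{\mathsf W_i^{(V)}\}_{i=t+1}^{\theta})\leq \mathsf f_{t+1}(\{\mathsf W_i^{(V)}\}_{i=t+1}^{\theta})$ (the paper's Lemma~\ref{lemma_D}), with the same case split on whether the truncation at $\Lambda-t+1$ is active; your explicit use of the symmetry $\mathsf W_{r+1}=\mathsf W_{\Lambda-r}$ to handle indices above $\theta$ is in fact slightly more careful than the paper's Case~1. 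The only blemish is the claimed saturation value $\omega L_c$ for $\mathsf f_t$ as $\mathsf W_t\to\infty$, which is too large when the truncation is active (cf.\ the paper's $R_t$ in Appendix~\ref{pf_lemma_R_exists}), but this affects only the existence threshold, not the uniqueness or ordering argument.
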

\begin{proof}
Appendix~\ref{pf_prop3}.
\end{proof}
Although the sequence $\{\mathsf W_t^{(V)}\}_{t=1}^\theta$ does not perfectly coincide with the sequence $\{\mathsf W_t\}_{t=1}^\theta$ formed at the end of line 5, it reflects the trend of $\{\mathsf W_t\}_{t=1}^\theta$ for abitrarily small $\{\delta_t\}_{t=1}^\theta$.

\subsection{VPA performance}\label{Sec_alg_2}
Before we show the PRF for VPA, we introduce Lemma~\ref{lemma} below. It states that if a column block of the message is decoded at some iteration, then it remains decoded in the subsequent iterations, and that the asymptotic SE $\psi_c^t$ \eqref{psi_SE} can be expressed in terms of $\mathsf f_t$ \eqref{def_ft} under some conditions.
\begin{lemma}\label{lemma}
Consider a SC-SPARC generated by an $(\omega,\Lambda, P)$ base matrix. Fix a noise variance $\sigma^2$ and a rate $R$.
\begin{enumerate}
\item If $\exists ~c\in[L_C], t\geq 1$, $\psi_c^t =0$, then $\psi_c^{s}=0$, $\forall s\geq t$.  \label{item1}
\item For a power allocation policy satisfying a)--b), at $t=1$, 
\begin{align}\label{psi_f11}
\psi_1^1 = 1 - \mathbbm{1}\left\{\mathsf f_1\left(\{\mathsf W_i\}_{i=1}^{\theta}\right)>2RL_R\right\};
\end{align}
if $\psi_{c}^{c} = 0, \forall c\leq t-1$, then at iterations $2\leq t\leq \theta$,
\begin{align}\label{psi_ftt}
\psi_t^t \leq 1 - \mathbbm{1}\left\{\mathsf f_t\left(\{\mathsf W_i\}_{i=t}^{\theta}\right) > 2RL_R\right\}.
\end{align} \label{item2}
\end{enumerate}
\end{lemma}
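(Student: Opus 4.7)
The plan is to dispatch Part~\ref{item1} with a short monotonicity induction and then combine monotonicity with the reflection symmetry of the base matrix to prove Part~\ref{item2}.

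For Part~\ref{item1}, I would show by induction on $t$ that $\psi_c^t$ is componentwise non-increasing in $t$. The base case $\psi_c^1\leq\psi_c^0=1$ is automatic because $\psi_c^t\in\{0,1\}$ for $t\geq 1$. For the step, assuming $\psi_c^t\leq\psi_c^{t-1}$ for every $c$, equation~\eqref{phi_SE} yields $\phi_r^t\leq\phi_r^{t-1}$, so $\mathsf W_{rc}/\phi_r^t\geq \mathsf W_{rc}/\phi_r^{t-1}$, and the indicator in~\eqref{psi_SE} at time $t$ dominates the one at time $t-1$. Hence $\psi_c^{t+1}\leq\psi_c^t$, and any zero persists to all later iterations, which is exactly the claim.

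For Part~\ref{item2}, I would first observe that properties a) and b), together with the band-diagonal support in Definition~\ref{def_base_matrix}, give the joint reflection symmetry $\mathsf W_{rc}=\mathsf W_{L_R-r+1,\,\Lambda-c+1}$; a straightforward induction then propagates this symmetry to the SE iterates, so $\psi_c^t=\psi_{\Lambda-c+1}^t$ for all $t$. Combined with Part~\ref{item1}, the hypothesis $\psi_c^c=0$ for $c\leq t-1$ upgrades to $\psi_c^{t-1}=0$ for every $c\in[1,t-1]\cup[\Lambda-t+2,\Lambda]$, so only the middle range $[t,\Lambda-t+1]$ can still contribute to $\phi_r^{t-1}$. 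Rows with $\mathsf W_{rt}\neq 0$ are $r\in[t,t+\omega-1]$, and for such $r$ the non-zero columns in that row lie in $[r-\omega+1,r]$; since $r-\omega+1\leq t$, intersecting with the surviving range gives $c\in[t,\min(r,\Lambda-t+1)]$. Using $\psi_c^{t-1}\leq 1$ on this set and $\mathsf W_{rc}=\mathsf W_c$ from property a) yields the upper bound $\phi_r^{t-1}\leq\sigma^2+\frac{1}{L_C}\sum_{c'=t}^{\min(r,\Lambda-t+1)}\mathsf W_{c'}$. Summing $\mathsf W_t/\phi_r^{t-1}$ over $r\in[t,t+\omega-1]$ then lower bounds the quantity by $\mathsf f_t(\{\mathsf W_i\}_{i=t}^{\theta})$, and plugging into~\eqref{psi_SE} produces~\eqref{psi_ftt}. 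The $t=1$ case collapses to equality because every $\psi_c^0=1$ exactly and because $r\leq\omega\leq\Lambda$ makes the $\min$ trivial, so~\eqref{psi_f11} holds with equality.

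The main technical care is the index bookkeeping near the midpoint: once $t$ is large enough that $r>\Lambda-t+1$ is possible within $r\in[t,t+\omega-1]$, the $\min$ in~\eqref{def_ft} becomes non-trivial and encodes the fact that the mirrored columns on the right-hand side of the base matrix have already been decoded by symmetry. Verifying that this cap aligns exactly with the surviving column range is what the symmetry argument buys us; after that, the rest is purely algebraic substitution.
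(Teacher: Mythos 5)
Your proposal is correct and follows essentially the same route as the paper: a monotonicity induction on the SE iterates ($\psi_c^{t+1}\leq\psi_c^t$ via the denominators $\phi_r^t$) for Part~\ref{item1}, and for Part~\ref{item2} a lower bound on $\sum_r \mathsf W_{rt}/\phi_r^{t-1}$ obtained by zeroing the already-decoded columns on both sides (using Part~\ref{item1} and the symmetry $\psi_c^t=\psi_{\Lambda-c+1}^t$) and bounding the surviving $\psi_{c'}^{t-1}$ by $1$, which reproduces $\mathsf f_t$ exactly. Your explicit verification of the reflection symmetry of the base matrix and its propagation through the SE is a welcome elaboration of a step the paper only asserts.
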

\begin{proof}
Appendix~\ref{pf_lemma}.
\end{proof} 
We present the PRF for VPA.
\begin{theorem}\label{thm_VPA}
Fix a finite coupling pair $(\omega,\Lambda)$ and an AWGN channel with noise variance $\sigma^2$. The PRF $P_V(R)$ \eqref{PRF} for VPA (Algorithm~\ref{VPA}) is given by
\begin{align}\label{PVRF}
P_V(R) =
\begin{cases}
\frac{2\omega}{L_RL_C}\sum_{c=1}^{\theta}\mathsf W_c^{(V)}, R< \frac{L_C(\omega+1)}{4L_R},~ \Lambda~\text{is even}\\
\frac{2\omega}{L_RL_C}\sum_{c=1}^{\theta-1}\mathsf W_c^{(V)}+ \mathsf W_\theta^{(V)}, R<\frac{L_C(\omega+2)}{4L_R}, \Lambda~\text{is odd}\\
\infty, ~\text{otherwise}.
\end{cases}
\end{align}
\end{theorem}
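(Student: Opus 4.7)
The strategy is to establish matching achievability and converse bounds on the infimum in~\eqref{PRF} using Proposition~\ref{prop3} and Lemma~\ref{lemma}. Throughout, $\{\mathsf W_c^{(V)}\}_{c=1}^\theta$ denotes the unique sequence produced by line~2 of Algorithm~\ref{VPA}, whose existence and monotonicity are guaranteed by Proposition~\ref{prop3}.

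First I would verify that the rate conditions in~\eqref{PVRF} are precisely the regimes in which the equations of line~2 admit finite positive solutions for every $t\in[\theta]$. Proceeding backward from $t=\theta$, each equation $\mathsf f_t=2RL_R$ has a positive finite root iff $2RL_R$ is strictly below the asymptote $\lim_{\mathsf W_t\to\infty}\mathsf f_t\bigl(\mathsf W_t,\mathsf W_{t+1}^{(V)},\ldots,\mathsf W_\theta^{(V)}\bigr)$. A direct computation from~\eqref{def_ft} using~\eqref{symmetric} shows that, for $\Lambda$ even, the binding asymptote occurs at $t=\theta$ and equals $L_C(\omega+1)/2$, because $\Lambda-\theta+1=\theta+1$ folds $\mathsf W_{\theta+1}=\mathsf W_\theta$ into the denominator for $r\ge\theta+1$. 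For $\Lambda$ odd and $\omega\ge 2$, the $t=\theta$ asymptote is the larger value $\omega L_C$ (a single term under $\min\{r,\theta\}=\theta$), while at $t=\theta-1$ the reflection $\mathsf W_{\theta+1}=\mathsf W_{\theta-1}$ enters and yields the tighter asymptote $L_C(\omega+2)/2$, matching~\eqref{PVRF}.

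For achievability, fix $P$ strictly greater than the right-hand side of~\eqref{PVRF}. As $\delta_t\to 0^+$, the total assembled power $\frac{\omega}{L_RL_C}\sum_{c=1}^{L_C}\mathsf W_c$ (after the symmetric completion of line~4) converges to exactly that right-hand side; hence one can pick positive $\{\delta_t\}$ small enough for the check in line~6 to pass. The residual step only enlarges $\mathsf W_1=\mathsf W_\Lambda$, and because $\mathsf f_t$ for $t\ge 2$ depends only on $\{\mathsf W_i\}_{i=t}^\theta$ and is strictly increasing in its first argument (immediate from~\eqref{def_ft}), the strict inequalities $\mathsf f_t(\{\mathsf W_i\}_{i=t}^\theta)>2RL_R$ are preserved at the algorithm's output for every $t\in[\theta]$. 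I then induct on $t$ using Lemma~\ref{lemma}: the base case $\psi_1^1=0$ follows from~\eqref{psi_f11}; the step combines the inductive hypothesis $\psi_c^c=0$ for $c\le t-1$ with~\eqref{psi_ftt} to get $\psi_t^t=0$; the symmetry assumption~\eqref{symmetric} yields $\psi_{\Lambda-t+1}^t=0$; and Lemma~\ref{lemma}(\ref{item1}) preserves earlier successes. After $\theta$ iterations, $\psi_c^\theta=0$ for every $c\in[L_C]$, which is~\eqref{success}.

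For the converse, suppose $P$ is strictly less than the right-hand side of~\eqref{PVRF}. By the uniqueness part of Proposition~\ref{prop3}, after line~2 the values equal $\mathsf W_t^{(V)}$ exactly, so after line~3 they are $\mathsf W_t^{(V)}+\delta_t>\mathsf W_t^{(V)}$. Summing with the symmetric completions of line~4 produces a quantity strictly exceeding the right-hand side of~\eqref{PVRF}, hence strictly exceeding $P$, which triggers the failure branch at line~7 for every positive choice of $\{\delta_t\}$. Combined with achievability, this pins $P_V(R)$ at the claimed value when $R$ is below the threshold, and gives $P_V(R)=\infty$ when $R$ is above the threshold because no finite $\mathsf W_\theta^{(V)}$ (resp.\ $\mathsf W_{\theta-1}^{(V)}$) can solve $\mathsf f_t=2RL_R$. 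The main obstacle will be the rate-threshold computation: one has to identify the binding iteration $t$ for each parity of $\Lambda$, and carefully handle the cap $\min\{r,\Lambda-t+1\}$ that merges variables with their reflections. The rest of the argument is a clean assembly of Proposition~\ref{prop3}, monotonicity of $\mathsf f_t$, and the two items of Lemma~\ref{lemma}.
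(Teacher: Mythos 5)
Your overall decomposition matches the paper's: identify the rate threshold $\bar R$ as the binding asymptote of $\mathsf f_t$ (your parity computation, with binding index $t=\theta$ for even $\Lambda$ and $t=\theta-1$ for odd $\Lambda$, agrees with the paper's Lemma~\ref{lemma_R_exists}); show that for $P$ above the claimed value the algorithm terminates without failure with $\mathsf f_t>2RL_R$ for all $t$; and feed these strict inequalities into Lemma~\ref{lemma} to get $\psi_c^\theta=0$, exactly as in Appendix~\ref{pf_thm2_step2}, including the observation that the residual transfer only affects $\mathsf f_1$ favorably. Your achievability argument is genuinely simpler than the paper's: you let line~2's re-solving guarantee $\mathsf f_t>2RL_R$ automatically and argue by continuity that the assembled power tends to $P_V(R)$ as $\delta_t\to 0^+$, whereas the paper constructs an explicit perturbation $\mathsf W_t^{(V)}+\gamma_t^*$ with Lipschitz constants $K_s^{(t)}$ and gain factors $g_t$ (Lemmas~\ref{lemma_R_exists}--\ref{lemma_Mrt}) and verifies the inequalities by hand. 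Your route is sound provided you justify that the root of $\mathsf f_t=2RL_R$ depends continuously on the later-stage values; this follows from joint continuity of $\mathsf f_t$, its strict monotonicity in $\mathsf W_t$ (Lemma~\ref{lemma_ftWt}), and the fact that the asymptote $2R_tL_R>2RL_R$ is unaffected by finite later values, but it should be stated.

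The genuine error is in your converse: you assert that ``after line~2 the values equal $\mathsf W_t^{(V)}$ exactly,'' citing the uniqueness in Proposition~\ref{prop3}. That is false for every $t<\theta$. Line~2 at stage $t$ solves $\mathsf f_t=2RL_R$ using the already-perturbed values $\mathsf W_{t+1},\dots,\mathsf W_\theta$ (each carrying its own $+\delta$ and the cascaded enlargements), not $\mathsf W_{t+1}^{(V)},\dots,\mathsf W_\theta^{(V)}$; the uniqueness in Proposition~\ref{prop3} concerns the simultaneous system, not the algorithm's sequential one, and the paper explicitly notes that the two sequences ``do not perfectly coincide.'' Your conclusion --- the assembled power strictly exceeds $P_V(R)$, hence exceeds any $P\leq P_V(R)$ and triggers line~7 --- is correct, but it must be obtained by the backward induction of Appendix~\ref{Appen_onlyif}: since $\mathsf f_t$ decreases in each $\mathsf W_s$, $s>t$ (Lemma~\ref{lemma_ftWs}) and increases in $\mathsf W_t$ (Lemma~\ref{lemma_ftWt}), the enlarged later values force the line-2 root at stage $t$ to exceed $\mathsf W_t^{(V)}$, and adding $\delta_t$ makes the excess strict. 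With that repair, and with the $R\geq\bar R$ case disposed of by the nonexistence of a finite root as you note, your argument closes.
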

\begin{proof}[Proof sketch]
The proof is divided into two steps.

(i)  We show that VPA outputs $\{\mathsf W_c\}_{c\in[L_c]}$, or equivalently it does not declare failure, if and only if $P>P_V(R)$ and $R$ less than the upper bound in \eqref{PVRF}. Appendix~\ref{pf_thm2_step1}.

(ii) We show that the output $\{\mathsf W_c\}_{c\in[L_c]}$ of VPA ensures successful decoding. Appendix~\ref{pf_thm2_step2}. 
\end{proof}

The working principle of VPA is to allocate sufficient power to the outer columns of the base matrix in order to jumpstart the \emph{wave}-like decoding process that propagates from the sides to the middle of the message, and to allocate lower power (but not too low that prohibits the decoding process) to the inner columns of the base matrix.


\subsection{VPA outperforms UPA}\label{Sec_alg_3}
\begin{proposition}\label{prop_comp}
Fix a finite coupling pair $(\omega,\Lambda)$ and an AWGN channel with noise variance $\sigma^2$. The rate that ensures $P_V(R)<\infty$ also ensures $P_U(R)<\infty$, i.e.,
\begin{align}\label{rset}
\{R\colon P_U(R)<\infty\}\subseteq \{R\colon P_V(R)<\infty\}.
\end{align}
For a rate $R$ that belongs to both sets in \eqref{rset}, it holds that
\begin{align}\label{PVPU}
P_V(R)\leq P_U(R).
\end{align}
\end{proposition}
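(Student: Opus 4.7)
The plan is to handle \eqref{rset} and \eqref{PVPU} separately. For both, I would compare VPA against the tight UPA allocation $\mathsf W_c \equiv \bar W := P_U(R) L_R/\omega$ at the boundary power $P=P_U(R)$.

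For the inclusion \eqref{rset}: Theorem~\ref{thm_UPA} gives $P_U(R)<\infty$ iff $R < \frac{L_C}{2L_R}\sum_{r=1}^\omega \frac{1}{r}$, and Theorem~\ref{thm_VPA} gives $P_V(R)<\infty$ iff $R$ is below $\frac{L_C(\omega+1)}{4L_R}$ (even $\Lambda$) or $\frac{L_C(\omega+2)}{4L_R}$ (odd $\Lambda$). So \eqref{rset} reduces to the harmonic bound $\sum_{r=1}^\omega \frac{1}{r} \leq \frac{\omega+1}{2}$, which I would prove by induction on $\omega$: the base case $\omega=1$ is equality, and the inductive step uses $\frac{1}{\omega+1} \leq \frac{1}{2}$.

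For the main inequality \eqref{PVPU}: Fix $R$ with $P_U(R)<\infty$ and $\bar W = P_U(R)L_R/\omega$. The key step is to establish
\begin{align}\label{fkey}
\mathsf f_t\bigl(\bar W, \bar W, \dots, \bar W\bigr) \geq 2 R L_R, \quad \forall t \in [\theta].
\end{align}
Substituting $r' = r - t + 1$ in \eqref{def_ft} and writing $q_t = \Lambda - 2t + 2$, one obtains $\mathsf f_t(\bar W,\dots,\bar W) = \sum_{r'=1}^{\omega}\bar W/\bigl(\sigma^2 + \min(r',q_t)\bar W/L_C\bigr)$. For $t=1$ this equals $2RL_R$ by the definition of $\bar W$. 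For $t \geq 2$, each term with $r' > q_t$ (if any) is replaced by the larger value $\bar W/(\sigma^2 + q_t \bar W/L_C)$, so $\mathsf f_t \geq \mathsf f_1 = 2RL_R$.

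Given \eqref{fkey}, a backward induction on $t$ yields $\mathsf W_t^{(V)} \leq \bar W$ for every $t \in [\theta]$: each $\mathsf W_i$ with $i > t$ (and its symmetric counterpart $\mathsf W_{\Lambda-i+1} = \mathsf W_i$) appears only in the denominators of \eqref{def_ft}, so $\mathsf f_t$ is non-increasing in those variables and strictly increasing in $\mathsf W_t$. The base case $t=\theta$ is \eqref{fkey} evaluated at $\mathsf W_\theta = \bar W$; for the inductive step, monotonicity gives $\mathsf f_t(\bar W, \mathsf W_{t+1}^{(V)}, \dots, \mathsf W_\theta^{(V)}) \geq \mathsf f_t(\bar W, \dots, \bar W) \geq 2RL_R$ by Proposition~\ref{prop3} and the induction hypothesis, so the unique root of $\mathsf f_t = 2RL_R$ in $\mathsf W_t$ is at most $\bar W$. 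Summing and using that UPA's total power equals $\frac{\omega}{L_R L_C}\sum_c \bar W = P_U(R)$ in the formula \eqref{PVRF} yields \eqref{PVPU}. The main obstacle is the bookkeeping in \eqref{fkey}: the truncation $\min(r',q_t)$ together with the symmetric extension for $i > \theta$ must be tracked carefully across both parities of $\Lambda$ and across the regimes $q_t \geq \omega$ (trivial, since then $\mathsf f_t = \mathsf f_1$) versus $q_t < \omega$ (where the pairwise replacement above is needed).
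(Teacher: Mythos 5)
Your proposal is correct and follows essentially the same route as the paper: reduce \eqref{rset} to the harmonic bound $\sum_{r=1}^{\omega}\frac{1}{r}\leq\frac{\omega+1}{2}$, show $\mathsf f_t(\bar W,\dots,\bar W)\geq \mathsf f_1(\bar W,\dots,\bar W)=2RL_R$ at the tight uniform allocation $\bar W=P_U(R)L_R/\omega$, and then run a backward induction using monotonicity of $\mathsf f_t$ in $\mathsf W_t$ (increasing) and in $\mathsf W_s$, $s>t$ (decreasing) to get $\mathsf W_t^{(V)}\leq\bar W$ and sum. The only cosmetic difference is that the paper obtains the key inequality $\mathsf f_t(\bar W,\dots,\bar W)\geq 2RL_R$ by invoking its general Lemma~\ref{lemma_D}, whereas you verify it directly via the $\min\{r',q_t\}$ bookkeeping specialized to the uniform case.
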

\begin{proof}
Appendix~\ref{pf_prop_comp}. 
\end{proof}
In fact, UPA is a special case of VPA by carefully selecting $\{\delta_t\}_{t=1}^{\theta}$ (Appendix~\ref{UPA_sc_VPA}).

\section{Simulations}\label{Sec_sim}


We use an example to illustrate \eqref{PVPU}. Consider $\omega=2$, $\Lambda=5$, $P=3$, $\sigma=1$, and $R =0.45$. For UPA, we have $\psi_1^1 = 1- \mathbbm{1}\{5.1708>5.4\} = 1$, and Proposition~\ref{prop2} implies that a SC-SPARC with UPA fails to decode the message. We now determine the power allocation using VPA. Choosing $\delta_t = 0.01$, $\forall t=1,2,3$, and following lines 1--5 of VPA, we obtain $\mathsf W_1 = 9.87, \mathsf W_2 = 8.74, \mathsf W_3 = 5.88$. We check that line 9 of VPA is satisfied, and we transfer the residual power to the boundary columns yielding $\mathsf W_1 = 10.82$. Since the power $P>P_V(R)$ in Theorem~\ref{thm_VPA}, the output of VPA ensures successful decoding.



While VPA is designed in the limit of large section length $M\rightarrow \infty$, we show by simulations that power allocation imitating the shape of the letter V \eqref{eq_match} also improves the finite-blocklength error performance of a SC-SPARC. We consider a SC-SPARC of parameters $M= 512,L=30, L_C = 15, L_R = 18, M_R = 12$ and an AWGN channel of variance $\sigma^2 = 1$. Fig.~\ref{BLER_comparison_PA} compare the SC-SPARC with UPA \eqref{uniform_power} and that with a VPA-like power allocation chosen empirically in Table~I. Fig.~\ref{BLER_comparison_PA} shows that the BLER of the VPA-like power allocation is smaller than that of UPA, especially in the middle part of the waterfall region. Fig.~\ref{BLER_iter} shows the convergence of the BLERs. To reduce the complexities, we use the Hadamard design matrix as in \cite{Barbier17}\hspace{1sp}\cite{Rush21}, instead of using the i.i.d. Gaussian design matrix. The simulations may not perfectly match our theoretical results since the asymptotic SE is accurate only for an i.i.d. Gaussian design matrix and $M\rightarrow\infty$.

\begin{figure}
\includegraphics[scale=0.55]{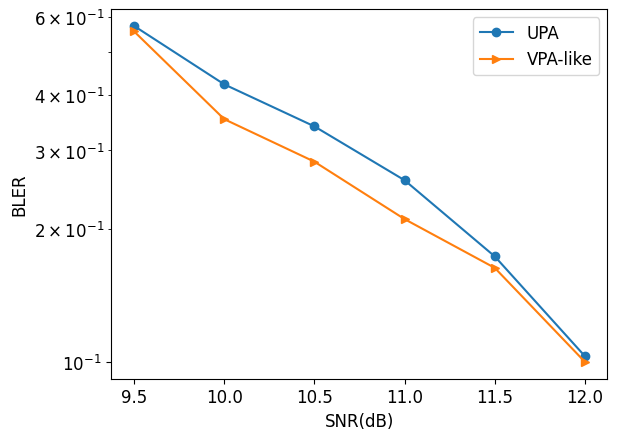}
\caption{Block error rate vs. SNR(dB).}
\label{BLER_comparison_PA}
\end{figure}

\begin{figure}
\includegraphics[scale=0.55]{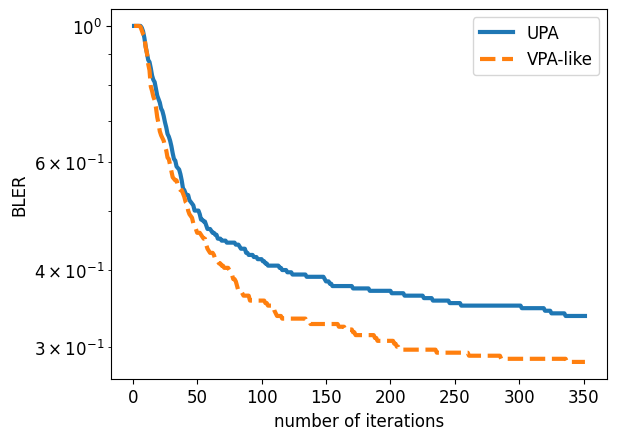}
\caption{Block error rate vs. number of iterations at SNR$=10.5$dB. }
\label{BLER_iter}
\end{figure}

\begin{table}[ht]
\centering
\caption{VPA-like power allocation \eqref{eq_match}}
\begin{tabular}{lcc} 
\toprule
SNR(dB) & Outer columns & Inner columns \\
\midrule
9.5 & $\mathsf W_1=\dots =\mathsf W_3 = 42.51$ &  $\mathsf W_4=\dots=\mathsf W_{8}=38.51$ \\ 
10.0 & $\mathsf W_1=\dots = \mathsf W_5 = 46.67$ & $\mathsf W_6=\dots=\mathsf W_{8}= 41.67$\\
10.5 & $\mathsf W_1=\dots=\mathsf W_4 = 52.36$ & $\mathsf W_5=\dots=\mathsf W_8= 48.36$\\
11.0 &  $\mathsf W_1=\dots=\mathsf W_5 =  58.32$ &  $\mathsf W_6=\dots=\mathsf W_{8}=53.32$\\
11.5 &   $\mathsf W_1=\dots=\mathsf W_6 = 64.56$ &   $\mathsf W_7=\dots=\mathsf W_{8}=59.56$\\
12.0 & $\mathsf W_1=\dots=\mathsf W_6 = 72.52$ & $\mathsf W_7=\dots=\mathsf W_{8}= 66.52$\\
\bottomrule
\end{tabular}
\end{table}

\section{Conclusion}
In this paper, we propose V-power allocation for the base matrix of a SC-SPARC with a finite coupling pair. It yields power allocation that descends from the outer columns to the inner columns of the base matrix, resembling the shape of the letter V. By analyzing the PRFs, we show that given a code rate, V-power allocation ensures successful decoding for a wider range of power compared to uniform power allocation. Numerical simulations indicate that power allocation following the shape of the letter V reduces the finite-blocklength block error rates of a SC-SPARC.

\appendix
\subsection{Proof of Theorem~\ref{thm_UPA}}\label{pf_thm_UPA}
\subsubsection{Proof of $R_U(P)$}\label{pf_thm_UPA_sec1}

Before we prove $R_U(P)$ in \eqref{RUPF}, we first show that a SC-SPARC with UPA succesfully decodes the entire message if and only if $\psi_1^1=0$. If $\psi_1^1=0$, then $g\geq 1$ in \eqref{prop2_psic1}, and Proposition~\ref{prop2} implies that the decoding is successful. To prove the reverse direction, we prove its equivalent, namely, if $\psi_1^1 = 1$, then a SC-SPARC with UPA cannot decode successfully. Since $\psi_c^1$ \eqref{psi_SE} is non-decreasing on $1\leq c\leq \theta$, we conclude that $\psi_1^1= 1$ implies $\psi_c^1 =1$ for all $c\in[L_C]$. Thus, $g=0$ in \eqref{prop2_psic1}, and Proposition~\ref{prop2} implies the decoding failure.

We proceed to prove $R_U(P)$ \eqref{RUPF}. We write it as
\begin{subequations}\label{pf_thm_UPA_1}
\begin{align}\label{pf_thm_UPA_1b}
R_U(P)
& = \sup\{R^*\colon \forall R<R^*, \psi_1^1(R, P,\text{UPA})=0\}\\\label{pf_thm_UPA_1c}
& = \sup \{R^*\colon \psi_1^1(R^*,P,\text{UPA})=0\}\\ \label{pf_thm_UPA_1d}
& = \left\{R^*\colon \sum_{r=1}^{\omega}\frac{P\frac{L_R}{\omega}}{\sigma^2 + \frac{r}{L_C}P\frac{L_R}{\omega}}=2R^*L_R\right\},
\end{align}
\end{subequations}
where \eqref{pf_thm_UPA_1b} holds as we have proved that a SC-SPARC with UPA decodes successfully if and only if $\psi_1^1=0$; \eqref{pf_thm_UPA_1c} holds since 
\begin{align}\label{psi11UPA}
\psi_1^1(R,P,\text{UPA}) = 1 - \mathbbm{1}\left\{\sum_{r=1}^\omega \frac{P\frac{L_R}{\omega}}{\sigma^2 + \frac{1}{L_C}rP\frac{L_R}{\omega}}>2RL_R\right\}
\end{align}
 is non-decreasing as $R$ increases; \eqref{pf_thm_UPA_1d} holds since \eqref{pf_thm_UPA_1c} is equivalent to the supremum of $R$ that makes the event in \eqref{psi11UPA} occur. Thus, \eqref{RUPF} follows.

\subsubsection{Proof of $P_U(R)$}\label{pf_thm_UPA_sec2}
Before we prove $P_U(R)$ \eqref{RUPF}, we calcuclate the derivative of the left side of the event in \eqref{psi11UPA} with respect to $P$ as
\begin{subequations}\label{devent}
\begin{align}
\frac{\partial \sum_{r=1}^\omega \frac{P\frac{L_R}{\omega}}{\sigma^2 + \frac{1}{L_C}rP\frac{L_R}{\omega}}}{\partial P}& = \sum_{r=1}^\omega \frac{\frac{L_R}{\omega}\sigma^2 }{\left(\sigma^2 + \frac{1}{L_C}rP\frac{L_R}{\omega}\right)^2}\\
&>0.
\end{align}
\end{subequations}
Since the left side of the event in \eqref{psi11UPA} increases as $P$ increases, we conclude that $\psi_1^1(R,P,\text{UPA})$ is non-increasing as $P$ increases.

To express $P_U(R)$ in terms of the inverse function $R_U^{-1}$ of $R_U(R)$, we show that the inverse function $R_U^{-1}$ exists. We calculate the derivative of $R_U(P)$ with respect to $P$ as
\begin{subequations}\label{dRUP}
\begin{align}
\frac{d R_U(P)}{d P} &= \frac{L_C}{2L_R}\sum_{r=1}^{\omega}\frac{\frac{L_C}{L_R}\frac{\sigma^2}{P^2}\omega}{(r+\frac{L_C}{L_R}\frac{\sigma^2}{P}\omega)^2}\\
&>0.
\end{align}
\end{subequations}
Since $R_U(P)$ is differentiable and its derivative is positive, we conclude that $R_U(P)$ is continuous and monotone, thus $R_U(P)$ is bijective and has an inverse function.

To demonstrate the domain and the range of the inverse function $R_U^{-1}$, we show the range of $R_U(P)$. Since $R_U(P)$ increases as $P$ increases by \eqref{dRUP}, it holds that for $\forall P<\infty$,
\begin{subequations}
\begin{align}
R_U(P)& < R_U(\infty)\\ \label{RUUB}
      &= \frac{L_C}{2L_R}\sum_{r=1}^\omega\frac{1}{r},
\end{align}
\end{subequations}
meaning that the inverse function satisfies
\begin{align}\label{RIrange}
R_U^{-1}(R)<\infty, \text{if and only if}~R< \frac{L_C}{2L_R}\sum_{r=1}^\omega\frac{1}{r}.
\end{align}

We proceed to show $P_U(R)$ \eqref{RUPF}. We write it as
\begin{subequations}\label{pf_thm_UPA_2}
\begin{align} \label{pf_thm_UPA_2b}
P_U(R)
& = \inf\{P^*\colon \forall P>P^*, \psi_1^1(R,P,\text{UPA}) = 0\}\\\label{pf_thm_UPA_2c}
& = \inf\{P^*\colon \psi_1^1(R,P^*,\text{UPA}) = 0\}\\\label{pf_thm_UPA_2d}
& = \left\{P^*\colon \sum_{r=1}^{\omega}\frac{P^*\frac{L_R}{\omega}}{\sigma^2 + \frac{r}{L_C}P^*\frac{L_R}{\omega}}=2RL_R \right\}\\\label{pf_thm_UPA_2e}
& = R^{-1}(R),
\end{align}
\end{subequations}
where \eqref{pf_thm_UPA_2b} holds as we have shown in Appendix~\ref{pf_thm_UPA_sec1} that a SC-SPARC with UPA decodes successfully if and only if $\psi_1^1 = 0$; \eqref{pf_thm_UPA_2c} holds as we have shown that $\psi_1^1(R,P,\text{UPA})$ is non-increasing as $P$ increases below \eqref{devent}; \eqref{pf_thm_UPA_2d} holds since \eqref{pf_thm_UPA_2c} is equivalent to the infimum of $P$ that makes the event in  \eqref{psi11UPA} occur; \eqref{pf_thm_UPA_2e} holds by noticing that the objective functions in \eqref{pf_thm_UPA_2d} and \eqref{pf_thm_UPA_1d} are the same and by the fact that $R_U^{-1}$ exists.

Plugging \eqref{RIrange} into \eqref{pf_thm_UPA_2e}, we obtain \eqref{PURF}.

\subsection{Proof of Proposition~\ref{prop2}}\label{pf_prop2}
We show \eqref{psi_gct} by mathematical induction. We denote by $\bar{\mathsf W}$ the non-zero value of a base matrix with UPA \eqref{uniform_power}. Plugging \eqref{phi_SE} into \eqref{psi_SE}, we write the asymptotic SE parameter $\psi_c^t$ as
\begin{align}\label{psi_ct}
\psi_c^t = 1 - \mathbbm{1}\left\{\sum_{r=c}^{c+\omega-1}\frac{\mathsf W_{rc}}{\sigma^2 + \frac{1}{L_c}\sum_{c'=\underline{c}_r}^{\bar c_r}\mathsf W_{rc'}\psi_{c'}^{t-1}}>2RL_R \right\}
\end{align}
where Definition~\ref{def_base_matrix} i)--ii) implies
\begin{align}\label{cr1}
&\underline{c}_r \triangleq \max\{1,r-\omega + 1\},\\ \label{cr2}
&\bar{c}_r \triangleq \min\{\Lambda, r\}.
\end{align}

\textbf{Initial step}: At iteration $t=1$, by the assumption of Proposition~\ref{prop2}, $\psi_c^1=0$, $\forall c\leq g$. Plugging $t\leftarrow 1$ into \eqref{psi_ct} and using $g\leq \omega$, we write $\psi_c^1$, $c\leq g$ as
\begin{subequations}\label{psi_c1}
\begin{align}\label{psi_c1a}
\psi_c^1 &= 1-\mathbbm{1}\left\{\sum_{r=c}^{c+\omega-1}\frac{\bar{\mathsf W}}{\sigma^2 + \frac{1}{L_c}\min\{\omega,r\}\bar{\mathsf W}}>2RL_R\right\}\\
&= 0.
\end{align}
\end{subequations}

\textbf{Induction step}: Assuming that \eqref{psi_gct} holds at iteration $t$, we show that it continues to hold at iteration $t+1$. If $t\geq \frac{\theta}{g}$, then Lemma~\ref{lemma} item \ref{item1}) implies that \eqref{psi_gct} holds at iteration $t+1$. If $t<\frac{\theta}{g}$, the asymptotic SE $\psi_c^{t+1}$ can be upper bounded as
\begin{subequations}
\begin{align}\nonumber
&\psi_c^{t+1} \\ \label{psi_a}
\leq~& 1 - \mathbbm{1}\left\{\sum_{r=c}^{c+\omega - 1} \frac{\bar{\mathsf W}}{\sigma^2 + \frac{1}{L_c}\sum_{c'=r-\omega+1}^{r}\bar{\mathsf W}\psi_{c'}^{t}}>2RL_R\right\}
\\\label{psi_b}
 =~& 1 - \mathbbm{1}\left\{\sum_{r=c}^{c+\omega - 1} \frac{\bar{\mathsf W}}{\sigma^2 + \frac{1}{L_c}\sum_{c'=\max\{r-\omega+1,gt+1\}}^{r}\bar{\mathsf W}}>2RL_R\right\}\\\label{psi_c}
=~& 1 - \mathbbm{1}\left\{\sum_{r=c}^{c+\omega - 1} \frac{\bar{\mathsf W}}{\sigma^2 + \frac{1}{L_c}\min\{\omega, r- gt\}\bar{\mathsf W}}>2RL_R\right\}\\ \label{psi_d}
=~& 1 -\mathbbm{1}\left\{\sum_{r=c-gt}^{c-gt+\omega-1}\frac{\bar{\mathsf W}}{\sigma^2 + \frac{1}{Lc}\min\{\omega,r\}\bar{\mathsf W}}>2RL_R\right\},
\end{align} 
where \eqref{psi_a} holds by plugging $t\leftarrow t+1$, $\bar c_r\leq r$, and $\underline{c}_r\geq r-\omega + 1$ into \eqref{psi_ct}; \eqref{psi_b} holds by the induction assumption and the fact $t< \frac{\theta}{g}$; \eqref{psi_c} holds by rewriting the summation in the denominator of \eqref{psi_b}; \eqref{psi_d} holds by change of measure $r\leftarrow r+gt$. Comparing \eqref{psi_c1} and \eqref{psi_d}, we conclude that
\begin{align}\label{pf_psi_c_t_1}
\psi_c^{t+1} = 0,~\forall gt\leq c\leq g(t+1).
\end{align}
\end{subequations}
Using \eqref{pf_psi_c_t_1}, the induction assumption, and Lemma~\ref{lemma} item \ref{item1}), we conclude that \eqref{psi_gct} holds at iteration $t+1$.

\subsection{$g\leq\omega$ is sufficient}\label{pf_implication_assumption}
We show that in Proposition~\ref{prop2}, it suffices to limit $g\leq \omega$ since $g\geq \omega$ implies that the entire message is successfully decoded in the first iteration. Indeed, for UPA \eqref{uniform_power}, $\psi_c^1$ \eqref{psi_c1a} is non-decreasing on $1\leq c\leq \omega$, remains constant on $\omega\leq c\leq \theta$, and is symmetric about $c=\theta$. As a result, $\psi_\omega^1 = 0$ implies $\psi_c^1 = 0$ for all $c\in[L_C]$.

\subsection{Proof of Proposition~\ref{prop3}}\label{pf_prop3}
In Appendix~\ref{Appen_D_main}, we first show that the sequence of power allocation $\left\{\mathsf W_i^{(V)}\right\}_{i=1}^\theta$ is unique, and we then show that it is non-increasing \eqref{eq_match}. In Appendices~\ref{pf_lemma_ftWt}--\ref{pf_lemma_D}, we prove the lemmas used in Appendix~\ref{Appen_D_main}.

\subsubsection{Main proof}\label{Appen_D_main}
To show the uniqueness, we introduce Lemma~\ref{lemma_ftWt} below.
\begin{lemma}\label{lemma_ftWt}
Fixing $\{\mathsf W_i\}_{i=t+1}^\theta$, function $\mathsf f_t$ is continuous in $\mathsf W_t$ and is monotonically increasing as $\mathsf W_t$ increases.
\end{lemma}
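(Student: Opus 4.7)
The plan is to reduce the lemma to a single-variable calculus check on each summand of $\mathsf f_t$. First I would isolate the dependence on $\mathsf W_t$. Since $t \leq \theta = \lceil \Lambda/2 \rceil$, we have $\Lambda - t + 1 \geq t$, so $\min\{r, \Lambda - t + 1\} \geq t$ for every $r \in \{t, t+1, \ldots, t+\omega-1\}$. Thus the index $c' = t$ always falls inside the inner sum of the denominator in \eqref{def_ft}, and I can peel off its contribution. Setting
\begin{equation*}
A_r \triangleq \sigma^2 + \frac{1}{L_C}\sum_{c'=t+1}^{\min\{r,\Lambda - t + 1\}}\mathsf W_{c'},
\end{equation*}
(with the empty sum taken to be $0$ when $r=t$), the $r$-th summand becomes
\begin{equation*}
\frac{\mathsf W_t}{A_r + \tfrac{1}{L_C}\mathsf W_t},
\end{equation*}
and under the lemma's hypothesis $A_r$ is a constant depending only on the fixed quantities $\{\mathsf W_i\}_{i=t+1}^\theta$ and $\sigma^2$. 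The nonnegativity of the $\mathsf W_{c'}$ from Definition~\ref{def_base_matrix} together with $\sigma^2>0$ ensures $A_r \geq \sigma^2 > 0$.

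Given this rewriting, continuity is immediate: each summand is a rational function of $\mathsf W_t$ whose denominator $A_r + \mathsf W_t/L_C$ is bounded below by $A_r > 0$ on $[0,\infty)$, so it is continuous there, and a finite sum of continuous functions is continuous. For strict monotonicity, I would differentiate the $r$-th summand with respect to $\mathsf W_t$:
\begin{equation*}
\frac{d}{d\mathsf W_t}\frac{\mathsf W_t}{A_r + \tfrac{1}{L_C}\mathsf W_t} = \frac{A_r}{\bigl(A_r + \tfrac{1}{L_C}\mathsf W_t\bigr)^2} > 0.
\end{equation*}
Summing this strictly positive derivative over $r = t, \ldots, t+\omega-1$ shows $d\mathsf f_t/d\mathsf W_t > 0$, which gives strict monotonic increase in $\mathsf W_t$.

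There is no genuine obstacle; the lemma is a routine calculation once the summand is isolated. The only small point that must be checked carefully is the index condition $\Lambda - t + 1 \geq t$ for $t \leq \theta$, which legitimizes peeling off the $c'=t$ term uniformly across $r$. The lemma itself is lightweight, but it is the pivotal ingredient for establishing uniqueness of the solution of $\mathsf f_t = 2RL_R$ in Proposition~\ref{prop3} (via the intermediate value theorem and strict monotonicity), which is why it is singled out here.
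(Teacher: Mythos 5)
Your reduction to a single-variable computation contains a step that fails for the inner columns: the claim that $A_r$ is ``a constant depending only on the fixed quantities $\{\mathsf W_i\}_{i=t+1}^\theta$ and $\sigma^2$.'' The upper limit of the inner sum in \eqref{def_ft} is $\min\{r,\Lambda-t+1\}$, and when $r\geq \Lambda-t+1$ (which occurs for $t+\omega-1\geq \Lambda-t+1$, e.g.\ for $t$ near $\theta$ whenever $\omega\geq 2$) the sum runs up to $c'=\Lambda-t+1$. By the symmetry property \eqref{symmetric}, $\mathsf W_{\Lambda-t+1}=\mathsf W_t$ --- this is precisely the point of the footnote attached to \eqref{def_ft} --- so for those values of $r$ your $A_r$ depends on $\mathsf W_t$ and the summand is really
\begin{align*}
\frac{\mathsf W_t}{B_r+\tfrac{2}{L_C}\mathsf W_t},\qquad B_r\triangleq \sigma^2+\frac{1}{L_C}\sum_{c'=t+1}^{\Lambda-t}\mathsf W_{c'},
\end{align*}
not $\mathsf W_t/(A_r+\tfrac{1}{L_C}\mathsf W_t)$. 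Your derivative formula $A_r/(A_r+\mathsf W_t/L_C)^2$ is therefore wrong for those terms. This is exactly why the paper's proof splits into the two cases $t+\omega-1<\Lambda-t+1$ and $t+\omega-1\geq\Lambda-t+1$ and carries the extra term in \eqref{derivative_ft2} with multiplicity $2t+\omega-\Lambda-1$.

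The lemma itself survives, because the correct derivative of the affected summands is $B_r/(B_r+\tfrac{2}{L_C}\mathsf W_t)^2>0$, so the total derivative is still strictly positive and continuity is likewise unaffected. But as written your argument only covers the first case; you need to add the second case by peeling off \emph{both} occurrences of $\mathsf W_t$ (at $c'=t$ and at $c'=\Lambda-t+1$) from the denominator before differentiating. The index check you did perform ($\Lambda-t+1\geq t$, so $c'=t$ is always included) is correct, but it is not the only boundary issue in the $\min$.
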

\begin{proof}
Appendix~\ref{pf_lemma_ftWt}.
\end{proof}
Lemma~\ref{lemma_ftWt} indicates that fixing $\{\mathsf W_i\}_{i=t+1}^\theta$, $\mathsf f_t$ is a bijective function of $\mathsf W_t$. Thus, there exist unique power allocation $\mathsf W_{\theta}^{(V)},\dots,\mathsf W_1^{(V)}$ that satisfy $\mathsf f_{\theta} =\dots = \mathsf f_1 = 2RL_R$.

We proceed to show that the sequence $\{\mathsf W_i^{(V)}\}_{i=1}^\theta$ is non-increasing~\eqref{eq_match}.
\begin{lemma}\label{lemma_D}
For any $t=1,2,\dots,\theta-1$, given $\mathsf W_{t+1}\geq \mathsf W_{t+2}\geq \dots\geq \mathsf W_\theta$, if $\mathsf W_t = \mathsf W_{t+1}$, it holds that $\mathsf f_t(\{\mathsf W_i\}_{i=t}^\theta)\leq \mathsf f_{t+1}(\{\mathsf W_i\}_{i=t+1}^\theta)$.
\end{lemma}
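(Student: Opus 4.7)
The plan is to compare $\mathsf f_t$ and $\mathsf f_{t+1}$ term by term after shifting the outer index. In $\mathsf f_{t+1}$ I substitute $r \mapsto r+1$ so that both outer sums run over $r = t, t+1, \dots, t+\omega-1$. Since the hypothesis $\mathsf W_t = \mathsf W_{t+1}$ makes the numerators of the two sums identical, the inequality $\mathsf f_t \le \mathsf f_{t+1}$ will reduce to showing termwise that the denominator of $\mathsf f_t$ dominates that of the reindexed $\mathsf f_{t+1}$; equivalently,
\begin{equation*}
A_r \;\triangleq\; \sum_{c'=t}^{\min\{r,\Lambda-t+1\}}\!\!\!\!\mathsf W_{c'} \;\geq\; \sum_{c'=t+1}^{\min\{r+1,\Lambda-t\}}\!\!\!\!\mathsf W_{c'} \;\triangleq\; B_r
\end{equation*}
for every $r \in \{t,\dots,t+\omega-1\}$.

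I would then split into three subcases according to where the truncations in the two $\min$'s bind. If $r \le \Lambda - t - 1$, neither $\min$ is active and a direct cancellation gives $A_r - B_r = \mathsf W_t - \mathsf W_{r+1}$. If $r = \Lambda - t$, the $\min$ in $B_r$ binds while the one in $A_r$ does not, and one checks directly that $A_r - B_r = \mathsf W_t$. If $r \ge \Lambda - t + 1$, both $\min$'s bind, the two sums stabilize, and the symmetry~\eqref{symmetric} yields $A_r - B_r = \mathsf W_t + \mathsf W_{\Lambda-t+1} = 2\mathsf W_t$. The latter two regimes are therefore settled immediately by the nonnegativity of the $\mathsf W_c$'s.

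The only nontrivial case is the first, which reduces to showing $\mathsf W_t \ge \mathsf W_{r+1}$ for an index $r+1$ that can exceed $\theta$. When $r+1 \le \theta$, the monotonicity hypothesis $\mathsf W_{t+1} \ge \mathsf W_{t+2} \ge \cdots \ge \mathsf W_{\theta}$ combined with $\mathsf W_t = \mathsf W_{t+1}$ delivers the bound. When $r+1 > \theta$, I use~\eqref{symmetric} to rewrite $\mathsf W_{r+1} = \mathsf W_{\Lambda - r}$ and verify that $\Lambda - r \in [t+1,\theta]$: the lower bound $\Lambda - r \ge t + 1$ is equivalent to the subcase hypothesis $r \le \Lambda - t - 1$, while the upper bound $\Lambda - r \le \theta$ follows from $r \ge \theta$ together with the inequality $\Lambda - \theta \le \theta$, which holds for both parities of $\Lambda$. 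The monotonicity hypothesis applied at the mirrored index $\Lambda - r$ then closes the chain $\mathsf W_t = \mathsf W_{t+1} \ge \mathsf W_{\Lambda - r} = \mathsf W_{r+1}$.

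The main obstacle is the careful index bookkeeping around the midpoint $\theta$: the monotonicity hypothesis only orders the $\mathsf W_c$'s on $[1,\theta]$, so indices past the midpoint must be folded back using~\eqref{symmetric}. The condition $\Lambda \ge 2\omega - 1$ built into Definition~\ref{def_base_matrix}, which implies $\theta \ge \omega$, is what keeps both the mirrored index $\Lambda - r$ inside $[t+1,\theta]$ and the summation index $r = t+\omega-1$ within the valid column range; cleanly verifying these inclusions in each subcase is where most of the attention is spent.
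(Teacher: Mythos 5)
Your proposal is correct and follows essentially the same route as the paper: reindex the outer sum of $\mathsf f_{t+1}$ by $r\mapsto r+1$, reduce the claim to a termwise comparison of the denominator sums, and split into cases according to which $\min$ binds, with the only nontrivial case resolved by $\mathsf W_t=\mathsf W_{t+1}\geq \mathsf W_{r+1}$. Your treatment is in fact slightly more careful than the paper's Case~1, which asserts $\mathsf W_{t+1}\geq \mathsf W_{r+1}$ directly from the monotonicity hypothesis without spelling out the fold-back via \eqref{symmetric} when $r+1>\theta$.
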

\begin{proof}
Appendix~\ref{pf_lemma_D}.
\end{proof}
The sequence $\{\mathsf W_i^{(V)}\}_{i=1}^\theta$ satisfies
\begin{align}\label{D1}
\mathsf f_t\left(\{\mathsf W_i^{(V)}\}_{i=t}^\theta\right) = \mathsf f_{t+1}\left(\{\mathsf W_i^{(V)}\}_{i=t+1}^\theta\right)
\end{align}
for all $t=1,2,\dots,\theta-1$. At $t=\theta-1$, Lemmas~\ref{lemma_ftWt}--\ref{lemma_D} and \eqref{D1} imply that $\mathsf W_{\theta-1}^{(V)}\geq \mathsf W_{\theta}^{(V)}$. Similarly, iteratively applying Lemmas~\ref{lemma_ftWt}--\ref{lemma_D} to $t=\theta-2,\theta-3,\dots,1$ in the backward manner, we conclude \eqref{eq_match}.

\subsubsection{Proof of Lemma~\ref{lemma_ftWt}}\label{pf_lemma_ftWt}
We compute the derivative of $\mathsf f_t$ with respect to $\mathsf W_t$. If $t+\omega-1<\Lambda-t + 1$, 
\begin{align}\label{derivative_ft1}
\frac{\partial \mathsf f_t}{\partial \mathsf W_t} = \sum_{r=t}^{t+\omega-1}\frac{\sigma^2 + \frac{1}{L_c}\sum_{c'=t+1}^r \mathsf W_{c'}}{(\sigma^2 + \frac{1}{L_c}\sum_{c'=t}^{r}\mathsf W_{c'})^2};
\end{align}
if $t+\omega-1\geq\Lambda-t + 1$,
\begin{align}\nonumber
\frac{\partial \mathsf f_t}{\partial \mathsf W_t}  &=  \sum_{r=t}^{\Lambda-t} \frac{\sigma^2 + \frac{1}{L_c}\sum_{c'=t+1}^r \mathsf W_{c'}}{(\sigma^2 + \frac{1}{L_c}\sum_{c'=t}^{r}\mathsf W_{c'})^2}\\\label{derivative_ft2}
& +(2t+\omega-\Lambda-1)\frac{\sigma^2 + \frac{1}{L_c}\sum_{c'=t+1}^{\Lambda-t} \mathsf W_{c'}}{(\sigma^2 + \frac{1}{L_c}\sum_{c'=t}^{\Lambda-t+1}\mathsf W_{c'})^2}
\end{align}
Since $\mathsf f_t$ is differentiable and its derivative is positive, we conclude that Lemma~\ref{lemma_ftWt} holds.

\subsubsection{Proof of Lemma~\ref{lemma_D}}\label{pf_lemma_D}

Given $\mathsf W_{t+1}\geq \dots\geq \mathsf W_\theta$, function $\mathsf f_{t+1}$ can be written as
\begin{subequations}
\begin{align}\nonumber
&\mathsf f_{t+1} \left(\{\mathsf W_i\}_{t+1}^\theta\right)\\\label{prop_1a}
 =~& \sum_{r=t+1}^{t+\omega} \frac{\mathsf W_{t+1}}{\sigma^2 + \frac{1}{L_c}\sum_{c'=t+1}^{\min\{r,\Lambda - t\}}\mathsf W_{c'}}\\ \label{prop_1b}
=~& \sum_{r=t}^{t+\omega-1} \frac{\mathsf W_{t+1}}{\sigma^2 + \frac{1}{L_c}\sum_{c'=t+1}^{\min\{r+1,\Lambda - t\}}\mathsf W_{c'}}\\ \label{prop_1c}
=~&  \sum_{r=t}^{t+\omega - 1} \frac{\mathsf W_{t+1}}{\sigma^2 + \frac{1}{L_c}\left(\mathsf W_{t+1} + \sum_{c'=t+2}^{\min\{r+1,\Lambda - t\}}\mathsf W_{c'}\right)}
\end{align}
\end{subequations}
where \eqref{prop_1a} holds by definition~\eqref{def_ft}; \eqref{prop_1b} holds by change of measure $r\leftarrow r+1$; \eqref{prop_1c} holds by expanding the summation in the denominator of \eqref{prop_1b}.
Function $\mathsf f_t$ with $\mathsf W_{t}\leftarrow \mathsf W_{t+1}$ can be written as
\begin{align} \nonumber
&\mathsf f_t \left(\mathsf W_{t+1},\{\mathsf W_{i}\}_{i=t+1}^{\theta}\right) \\ \label{prop_2b}
= &\sum_{r=t}^{t+\omega - 1} \frac{\mathsf W_{t+1}}{\sigma^2 + \frac{1}{L_c}\left(\mathsf W_{t+1}+\sum_{c'=t+1}^{\min\{r,\Lambda - t+1\}}\mathsf W_{c'}\right)}.
\end{align}

To compare \eqref{prop_1c} and \eqref{prop_2b}, it suffices to compare the summations in their denominators. We denote by $D_t$ and  $D_{t+1}$ the summations in the denominators of \eqref{prop_2b} and \eqref{prop_1c}, respectively, i.e.,
\begin{align}
D_{t} &\triangleq \sum_{c'=t+1}^{\min\{r,\Lambda - t+1\}}\mathsf W_{c'}\\
D_{t+1} &\triangleq \sum_{c'=t+2}^{\min\{r+1,\Lambda - t\}}\mathsf W_{c'}.
\end{align}
Fix $r=t,\dots,t+\omega - 1$.\\ 
\textbf{Case 1:} If $r\leq \Lambda - t + 1$ and $r+1\leq \Lambda - t$, it holds that
\begin{align}
D_{t} - D_{t+1} &= \mathsf W_{t+1} - \mathsf W_{r+1}\\ \label{Case1b}
 &\geq 0
\end{align}
where \eqref{Case1b} holds by the fact $t+1\leq r+1\leq \Lambda - t$ and the fact $\mathsf W_{t+1}\geq \dots\geq \mathsf W_{\theta}$.\\
\textbf{Case 2:} If $r\leq \Lambda - t + 1$ and $r+1> \Lambda - t$, it holds that
\begin{align}\label{Case2a}
D_t &= \sum_{c'=t+1}^{r}\mathsf W_{c'} \\ \label{Case2b}
&\geq \sum_{c'=t+1}^{\Lambda-t}\mathsf W_{c'}\\ \label{Case2c}
& \geq D_{t+1},
\end{align}
where \eqref{Case2b} holds since the assumptions on $r$ in Case 2 imply $r\in\{\Lambda-t, \Lambda-t + 1\}$.\\
\textbf{Case 3:} If $r>\Lambda -t + 1$ and $r+1>\Lambda-t$, it holds that
\begin{align}
D_t - D_{t+1} = \mathsf W_{t+1} + \mathsf W_{\Lambda-t+1} \geq 0.
\end{align} 

Since cases 1--3 indicate $D_t\geq D_{t+1}$, we conclude that if $\mathsf W_t =\mathsf W_{t+1}$, then $\mathsf f_t\leq \mathsf f_{t+1}$.

\subsection{Proof of Lemma~\ref{lemma}}\label{pf_lemma}
\subsubsection{Proof of item~\ref{item1})}\label{pf_lemma_item1}
We prove item~\ref{item1}) by mathematical induction. We denote the set of zero positions of $\psi_c^t$ by
\begin{align}
\mathcal N^t \triangleq \{c\in[L_c]\colon \psi_c^t = 0\}.
\end{align}
To show item~\ref{item1}), it suffices to show
\begin{align}\label{N0}
\mathcal N^0\subseteq \mathcal N^1 \subseteq N^2 \subseteq \dots
\end{align}

\textbf{Initial step}: At $t=0$, $\psi_c^0=1$ for all $c\in[L_c]$, thus $\mathcal N^0$ is an empty set. It is trivial to conclude $\mathcal N^0\subseteq \mathcal N^1$.

\textbf{Induction step}: Assuming that $\mathcal N^{t-1}\subseteq \mathcal N^t$, we proceed to show $\mathcal N^t\subseteq \mathcal N^{t+1}$. The asymptotic SE \eqref{psi_SE} at iteration $t$ is given by \eqref{psi_ct}.
The induction assumption posits that if $\psi_c^{t-1}= 0$, we have $\psi_c^t=0$. As a result, the denominator of the event in \eqref{psi_ct} at iteration $t$ is larger than or equal to that at iteration $t+1$, and we obtain $\psi_c^t \geq \psi_c^{t+1}$. Since $\psi_c^t\in\{0,1\}$ is binary for all $c\in[L_C]$, $t=1,2,\dots$, we conclude \eqref{N0}.

\subsubsection{Proof of item~\ref{item2})}\label{pf_lemma_item2}
The asymptotic SE $\psi_1^1$ can be written as \eqref{psi_f11} by comparing \eqref{psi_f11} and \eqref{psi_ct} with $c\leftarrow 1$, $t\leftarrow 1$. It remains to show that \eqref{psi_ftt} holds for $t=2,\dots,\theta$ under the assumption that $\psi_c^c=0$ for $c\leq t-1$ in Lemma~\ref{lemma}. The SE parameter $\psi_t^t$ is given by \eqref{psi_ct} with $c\leftarrow t$ and the left side of its event can be lower bounded as
\begin{subequations}
\begin{align}\label{psi_ftt_a}
&\sum_{r=t}^{t+\omega - 1} \frac{\mathsf W_t}{\sigma^2 + \frac{1}{L_c}\sum_{c'=\underline c_r}^{\bar c_r}\mathsf W_{c'}\psi_{c'}^{t-1}} \\\label{psi_ftt_b}
\geq~&  \sum_{r=t}^{t+\omega - 1} \frac{\mathsf W_t}{\sigma^2 + \frac{1}{L_c}\sum_{c'=\max\{1,r-\omega + 1, t\}}^{\min\{r,\Lambda, \Lambda -t + 1\}}\mathsf W_{c'}}\\\label{psi_ftt_c}
=~& \sum_{r=t}^{t+\omega - 1} \frac{\mathsf W_t}{\sigma^2 + \frac{1}{L_c}\sum_{c'=t }^{\min\{r,\Lambda -t + 1\}}\mathsf W_{c'}}\\\label{psi_ftt_d}
=~&\mathsf f_t(\{\mathsf W_i\}_{i=t}^\theta),
\end{align}
\end{subequations}
where \eqref{psi_ftt_b} holds by \eqref{cr1}--\eqref{cr2}, the assumption $\psi_c^c=0$ for $c\leq t-1$, and the symmetry of $\psi_c^c$ \eqref{symmetric}; \eqref{psi_ftt_c} holds since $\Lambda -t + 1\leq \Lambda$, $t\geq 2$, and $r-\omega + 1 \leq t$; \eqref{psi_ftt_d} holds by definition \eqref{def_ft}. The equality of \eqref{psi_ftt_b} is achieved if and only if $\psi_c^{t-1}=1$ for all $t\leq c\leq \theta$. Replacing the left side of the event of $\psi_t^t$ by its lower bound in \eqref{psi_ftt_d}, we obtain \eqref{psi_ftt}.

\subsection{Proof of Theorem~\ref{thm_VPA}: step (i)}\label{pf_thm2_step1}
Given power $P<\infty$ and rate $R<\infty$, we show that VPA outputs $\{\mathsf W_{c}\}_{c\in[L_C]}$, i.e., it does not declare failure, if and only if $P>P_V(R)$ and $R$ is less than the upper bound in \eqref{PVRF}. To this end, we first introduce useful lemmas and notations in Appendix~\ref{Appen_l_and_n}; we prove the `if' direction in Appendix~\ref{Appen_if}; we prove the `only if' direction in Appendix~\ref{Appen_onlyif}; the proof of the lemmas in Appendix~\ref{Appen_l_and_n} are presented in Appendices~\ref{pf_lemma_R_exists}--\ref{pf_lemma_Mrt}.

\subsubsection{Lemmas and notations}\label{Appen_l_and_n}
We introduce Lemmas~\ref{lemma_R_exists}--\ref{lemma_Mrt}.
We denote by $\bar R$ the upper bound on $R$ in \eqref{PVRF}, i.e.,
\begin{align}
\bar R \triangleq \begin{cases}
\frac{L_C(\omega+1)}{4L_R}, & \Lambda~\text{is even},\\
\frac{L_C(\omega+2)}{4L_R}, & \Lambda ~\text{is odd}.
\end{cases}
\end{align}
Lemma~\ref{lemma_R_exists}, stated next, shows the existence of $\{\mathsf W_i^{(V)}\}_{i=1}^\theta$.
\begin{lemma}\label{lemma_R_exists}
If and only if $R<\bar R$, there exists a sequence $\mathsf W_1^{(V)},\mathsf W_2^{(V)},\dots,\mathsf W_\theta^{(V)}<\infty$ that satisfies 
$\mathsf f_t = 2RL_R$ (line 2 of VPA) simultaneously for all $t=1,2,\dots,\theta$.
\end{lemma}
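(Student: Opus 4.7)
The plan is to construct the sequence by backward induction on $t = \theta, \theta - 1, \ldots, 1$. At each step I assume $\mathsf W_{t+1}^{(V)}, \ldots, \mathsf W_\theta^{(V)}$ are already determined as finite, positive reals. By Lemma~\ref{lemma_ftWt}, $\mathsf f_t$ is continuous and strictly increasing in $\mathsf W_t$ and vanishes at $\mathsf W_t = 0$; the intermediate value theorem therefore yields a unique finite solution $\mathsf W_t^{(V)} \in [0, \infty)$ to $\mathsf f_t = 2RL_R$ if and only if $2RL_R < S_t$, where $S_t \triangleq \lim_{\mathsf W_t \to \infty}\mathsf f_t(\mathsf W_t, \mathsf W_{t+1}^{(V)}, \ldots, \mathsf W_\theta^{(V)})$. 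This reduces the lemma to a finite calculation of $S_t$ for each $t$.

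The second step is to evaluate $S_t$ by taking the limit of each summand of \eqref{def_ft} as $\mathsf W_t \to \infty$ with the other $\mathsf W_{c'}$'s held fixed. For each $r \in [t, t+\omega-1]$ the limit depends only on how many times the variable $\mathsf W_t$ appears in the denominator sum $\sum_{c'=t}^{\min\{r, \Lambda-t+1\}}\mathsf W_{c'}$. When $r < \Lambda - t + 1$ the variable appears only as the initial summand, so the ratio tends to $L_c$. When $r \geq \Lambda - t + 1$ and $t < \Lambda - t + 1$, the symmetry \eqref{symmetric} enforces $\mathsf W_{\Lambda - t + 1} = \mathsf W_t$, so the variable appears twice and the ratio tends to $L_c/2$. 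The exceptional configuration $t = \Lambda - t + 1$, which occurs only for $\Lambda$ odd at $t = \theta$, collapses the two indices, so every ratio tends to $L_c$ and $S_\theta = \omega L_c$. Counting indices in each regime gives, for $t < (\Lambda+1)/2$,
\begin{equation*}
S_t = \min\{\omega, \Lambda - 2t + 1\}\, L_c + \max\{0, \omega - \Lambda + 2t - 1\}\, \tfrac{L_c}{2}.
\end{equation*}

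The final step is to minimise $S_t$ over $t$ and match the result to $\bar R$. In the regime where the second summand is active, $S_t = L_c(\Lambda - 2t + 1 + \omega)/2$ is strictly decreasing in $t$, so the minimum is attained at the largest admissible index. Direct substitution gives $\min_t S_t = S_\theta = L_c(\omega + 1)/2$ for $\Lambda$ even and $\min_t S_t = S_{\theta - 1} = L_c(\omega + 2)/2$ for $\Lambda$ odd (a comparison with the exceptional value $S_\theta = \omega L_c$ confirms this whenever $\omega \geq 2$, since $\omega L_c \geq L_c(\omega+2)/2$ iff $\omega \geq 2$). In both cases $\min_t S_t = 2 L_R \bar R$, so the existence condition $2RL_R < \min_t S_t$ is exactly $R < \bar R$, giving the ``if'' direction. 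For the ``only if'' direction, when $R \geq \bar R$ we have $2RL_R \geq S_{t^\star}$ at the critical index $t^\star$, so $\mathsf f_{t^\star} = 2RL_R$ admits no finite solution and the sequence breaks.

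The main obstacle is the careful bookkeeping of the symmetry built into VPA: one must recognise that the upper endpoint $\min\{r, \Lambda - t + 1\}$ of the denominator sum implicitly doubles the coefficient of $\mathsf W_t$ (through $\mathsf W_{\Lambda - t + 1} = \mathsf W_t$) whenever the summation reaches past the centre, and that this doubling collapses exactly at the odd-$\Lambda$ centre column $t = \theta$. Correctly handling this gives the sharper suprema $L_c(\omega+1)/2$ and $L_c(\omega+2)/2$ in place of the naive $\omega L_c$ that one would obtain without accounting for the symmetry.
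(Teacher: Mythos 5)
Your proof is correct and follows essentially the same route as the paper's: backward induction with the intermediate value theorem via Lemma~\ref{lemma_ftWt}, evaluation of the limit $S_t = 2R_tL_R$ as $\mathsf W_t\to\infty$ (which is independent of the finite values already fixed), and minimization over $t$ to identify $\bar R$. Your unified $\min/\max$ formula for $S_t$ and the explicit remark that the odd-$\Lambda$ identification $\min_t S_t = L_C(\omega+2)/2$ requires $\omega\geq 2$ are slightly more careful than the paper's three-case bookkeeping, but the argument is the same.
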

\begin{proof}
Appendix~\ref{pf_lemma_R_exists}.
\end{proof}
Lemma~\ref{lemma_dftWt} shows how $\frac{\partial \mathsf f_t}{\partial \mathsf W_t}$ changes with $\mathsf W_t$.
\begin{lemma}\label{lemma_dftWt}
Fixing $\{\mathsf W_i\}_{i=t+1}^\theta$, the derivative $\frac{\partial \mathsf f_t}{\partial \mathsf W_t}$ \eqref{derivative_ft1}--\eqref{derivative_ft2} monotonically decreases as $\mathsf W_t$ increases.
\end{lemma}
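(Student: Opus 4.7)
The plan is to show that every summand in the expressions \eqref{derivative_ft1}--\eqref{derivative_ft2} for $\frac{\partial \mathsf f_t}{\partial \mathsf W_t}$ is, individually, a strictly decreasing function of $\mathsf W_t$; monotonicity of the whole derivative then follows by taking a nonnegative linear combination.

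The key structural observation is that, for every $r$ in the outer sum, the numerator $\sigma^2+\frac{1}{L_c}\sum_{c'=t+1}^{r}\mathsf W_{c'}$ does not depend on $\mathsf W_t$ (the sum starts at $c'=t+1$), whereas the denominator $\sigma^2+\frac{1}{L_c}\sum_{c'=t}^{r}\mathsf W_{c'}$ is an affine, strictly increasing function of $\mathsf W_t$ (the sum starts at $c'=t$). I would set
\begin{equation*}
N_r \triangleq \sigma^2+\tfrac{1}{L_c}\sum_{c'=t+1}^{r}\mathsf W_{c'}, \qquad D_r(\mathsf W_t) \triangleq N_r+\tfrac{1}{L_c}\mathsf W_t,
\end{equation*}
so that each summand has the clean form $N_r/D_r(\mathsf W_t)^2$. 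A one-line differentiation gives
\begin{equation*}
\frac{d}{d\mathsf W_t}\!\left[\frac{N_r}{D_r(\mathsf W_t)^2}\right] = -\frac{2N_r}{L_c\, D_r(\mathsf W_t)^3} < 0,
\end{equation*}
since $N_r\geq \sigma^2>0$ and $D_r(\mathsf W_t)>0$. Hence each summand is strictly decreasing in $\mathsf W_t$.

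To finish, I would handle the two cases of Lemma~\ref{lemma_ftWt} separately. In the case $t+\omega-1<\Lambda-t+1$, \eqref{derivative_ft1} is a finite sum of such terms, so it is strictly decreasing in $\mathsf W_t$. In the case $t+\omega-1\geq \Lambda-t+1$, the extra boundary contribution in \eqref{derivative_ft2} has coefficient $2t+\omega-\Lambda-1$, and the case hypothesis gives $2t+\omega-\Lambda-1\geq 1>0$; the boundary term is therefore a positive multiple of a term of the same structure, and is again strictly decreasing. A positive linear combination of strictly decreasing functions is strictly decreasing, which yields the lemma.

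No step looks like a real obstacle: the whole argument reduces to recognizing the shared $N_r/D_r(\mathsf W_t)^2$ form of each summand and checking the sign of the boundary coefficient in \eqref{derivative_ft2}. The only thing worth being careful about is the case split and the bookkeeping on the range of $c'$ in the numerator versus denominator, which is where the desired independence of $N_r$ from $\mathsf W_t$ ultimately comes from.
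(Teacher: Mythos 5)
Your proof is correct and follows essentially the same route as the paper, which simply observes that $\mathsf W_t$ appears only as an additive term in the denominators of \eqref{derivative_ft1}--\eqref{derivative_ft2}; you just make this explicit by differentiating each summand. One negligible imprecision: in the boundary term of \eqref{derivative_ft2} the denominator sum runs to $\Lambda-t+1$ and, by the symmetry $\mathsf W_{\Lambda-t+1}=\mathsf W_t$, contains $\mathsf W_t$ with coefficient $\tfrac{2}{L_C}$ rather than $\tfrac{1}{L_C}$, but that term is still strictly decreasing in $\mathsf W_t$, so the conclusion stands.
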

\begin{proof}
Appendix~\ref{pf_lemma_dftWt}.
\end{proof}
Lemma~\ref{lemma_ftWs} below shows how $\mathsf f_t$ changes with $\mathsf W_s$, $s\geq t+1$.
\begin{lemma}\label{lemma_ftWs}
Fixing $\mathsf W_i$ for $t\leq i\leq \theta$, $i\neq s$, $s\geq t+1$, function $\mathsf f_t$ is continuous in $\mathsf W_s$ and is monotonically decreasing as $\mathsf W_s$ increases.
\end{lemma}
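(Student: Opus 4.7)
The plan is to compute $\partial \mathsf f_t / \partial \mathsf W_s$ explicitly and show that it is non-positive and bounded on compact subintervals, which mirrors and reuses the template already employed in the proof of Lemma~\ref{lemma_ftWt}. Since $\mathsf W_t$ appearing in the numerator of \eqref{def_ft} is held fixed (the variable of interest now is $\mathsf W_s$ with $s\geq t+1$), all dependence on $\mathsf W_s$ enters through the denominators, and differentiability together with a non-positive derivative immediately yields both continuity and monotone decrease.

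First I would make the dependence on the independent variables $\{\mathsf W_i\}_{i=t}^\theta$ explicit. For any index $c' > \theta$ arising in the inner sum $\sum_{c'=t}^{\min\{r,\Lambda-t+1\}} \mathsf W_{c'}$, the symmetry condition (b) lets me substitute $\mathsf W_{c'}\leftarrow \mathsf W_{\Lambda-c'+1}$. After this substitution, $\mathsf W_s$ may appear in $D_r \triangleq \sum_{c'=t}^{\min\{r,\Lambda-t+1\}} \mathsf W_{c'}$ directly (when $c' = s \leq \min\{r,\Lambda-t+1\}$) and/or via symmetry (when $c' = \Lambda-s+1 \leq \min\{r,\Lambda-t+1\}$), so that $\partial D_r / \partial \mathsf W_s \in \{0,1,2\}$. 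Term-by-term differentiation of \eqref{def_ft} then yields
\[
\frac{\partial \mathsf f_t}{\partial \mathsf W_s} \;=\; -\frac{1}{L_c}\sum_{r=t}^{t+\omega-1} \frac{\mathsf W_t\,\bigl(\partial D_r / \partial \mathsf W_s\bigr)}{\bigl(\sigma^2 + D_r/L_c\bigr)^2} \;\leq\; 0,
\]
because every factor of every summand is non-negative (using $\mathsf W_t,\mathsf W_s \geq 0$ from property (b) and $\sigma^2 > 0$). This derivative is finite for every finite $\mathsf W_s$ since $\sigma^2 + D_r/L_c \geq \sigma^2 > 0$, so $\mathsf f_t$ is $C^1$ in $\mathsf W_s$, hence continuous, and non-increasing as $\mathsf W_s$ grows, which is the conclusion.

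The hard part will be the bookkeeping of how many times $\mathsf W_s$ contributes to each $D_r$. This splits into a finite case analysis on whether $s$, $\Lambda-s+1$, or both lie in the summation range $[t,\min\{r,\Lambda-t+1\}]$, depending on the relative sizes of $r$, $s$, $\omega$, $t$, and $\Lambda$. The analysis parallels the one separating \eqref{derivative_ft1} and \eqref{derivative_ft2} in Appendix~\ref{pf_lemma_ftWt}, only with the role of $\mathsf W_t$ played by $\mathsf W_s$ and with the numerator derivative of each term being identically zero; since the sign of the derivative is determined solely by the non-negative quantity $\partial D_r/\partial \mathsf W_s$, one does not even need to enumerate the cases to reach the $\leq 0$ conclusion stated in the lemma.
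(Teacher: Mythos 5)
Your proof is correct and follows essentially the same route as the paper's: differentiate $\mathsf f_t$ with respect to $\mathsf W_s$, observe that $\mathsf W_s$ enters only through the denominators (once directly and possibly a second time as $\mathsf W_{\Lambda-s+1}$ by the symmetry assumption), and conclude the derivative is non-positive. If anything, your bookkeeping with $\partial D_r/\partial \mathsf W_s\in\{0,1,2\}$ is more careful than the paper's two-case computation, which tacitly treats $\mathsf W_s$ as appearing in every denominator; the price is that you only establish non-increase rather than the strict decrease claimed in the lemma statement, but the paper's own proof has the same issue for the terms with $r<s$, and only the non-strict version is actually needed downstream.
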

\begin{proof}
Appendix~\ref{pf_lemma_ftWs}.
\end{proof}

Lemma~\ref{lemma_Mrt} below shows how $\frac{\partial \mathsf f_t}{\partial \mathsf W_t}$ changes with $\{\mathsf W_i\}_{i=t+1}^\theta$.
\begin{lemma}\label{lemma_Mrt}
Consider $\mathsf W_i=\mathsf W_{\Lambda-i+1} \in[0, b_i]$, $t+1\leq i\leq\theta$. If the upper bounds of the intervals satisfy
\begin{align}\label{ub_geq_sqrt}
\sigma^2 + \frac{1}{L_C}\sum_{i=t+1}^{\Lambda-t}b_i\leq \sqrt{\frac{\mathsf W_t}{L_C}},
\end{align}
then the derivative $\frac{\partial \mathsf f_t}{\partial \mathsf W_t}$  \eqref{derivative_ft1}--\eqref{derivative_ft2} is non-decreasing as the elements in any non-empty subset of $\{\mathsf W_i\}_{i=t+1}^\theta$ increase on their corresponding intervals, $t=1,2,\dots,\theta$.
\end{lemma}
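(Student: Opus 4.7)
My plan begins with a reduction: the statement is about monotonicity under any simultaneous subset-increase of $\{\mathsf W_i\}_{i=t+1}^\theta$, but this follows at once from coordinate-wise monotonicity, simply by performing the increases one variable at a time. Hence it suffices to prove $\partial^2 \mathsf f_t/(\partial \mathsf W_t\,\partial \mathsf W_s)\ge 0$ for each $s \in \{t+1,\ldots,\theta\}$ at every point of the feasible rectangle $\prod_{i=t+1}^{\theta}[0,b_i]$.

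I would then exploit the explicit forms in \eqref{derivative_ft1}--\eqref{derivative_ft2}. In both cases, $\partial \mathsf f_t/\partial \mathsf W_t$ is a positive-coefficient sum of terms of the shape $T_r = A_r/(A_r + \alpha_r)^2$, where $A_r = \sigma^2 + \frac{1}{L_C}\sum_{c'=t+1}^{m_r}\mathsf W_{c'}$ for an upper index $m_r \in \{r,\,\Lambda - t\}$ determined by the summand, and $\alpha_r \in \{\mathsf W_t/L_C,\,2\mathsf W_t/L_C\}$. The factor $2$ appears only for the lone extra term of \eqref{derivative_ft2}, where the symmetry $\mathsf W_{\Lambda - t + 1} = \mathsf W_t$ counts $\mathsf W_t$ twice in the denominator. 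Since $\mathsf W_s$ ($s \ge t+1$) enters $T_r$ only through $A_r$, a direct quotient-rule calculation gives
\[
\frac{\partial T_r}{\partial \mathsf W_s} \;=\; \frac{\partial A_r}{\partial \mathsf W_s}\cdot\frac{\alpha_r - A_r}{(A_r + \alpha_r)^3},
\]
where $\partial A_r/\partial \mathsf W_s = k/L_C$ with $k \in \{0,1,2\}$ recording how many members of the symmetric pair $\{s,\,\Lambda-s+1\}$ lie inside $[t+1, m_r]$. The prefactor is non-negative, so the sign is controlled entirely by $\alpha_r - A_r$, and it suffices to show $A_r \le \alpha_r$ uniformly over the rectangle.

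To obtain the uniform bound, the symmetry $\mathsf W_i = \mathsf W_{\Lambda - i + 1}$ forces the widest partial sum that can ever occur in any $A_r$ to run from $t+1$ to $\Lambda - t$, so $A_r \le \sigma^2 + \frac{1}{L_C}\sum_{i=t+1}^{\Lambda-t}b_i$, and hypothesis \eqref{ub_geq_sqrt} upper-bounds this worst case by $\sqrt{\mathsf W_t/L_C}$. In the operating regime $\mathsf W_t \ge L_C$, which the VPA construction produces in the SNR range of interest, one has $\sqrt{\mathsf W_t/L_C} \le \mathsf W_t/L_C = \alpha_r$, so $\alpha_r - A_r \ge 0$ for every generic term, and the extra-term inequality $A_r \le 2\mathsf W_t/L_C$ holds a fortiori. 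The main obstacle I anticipate is the careful bookkeeping of symmetry multiplicities at boundary indices, especially at $s = \theta$, where $\Lambda$ odd makes $\Lambda - s + 1 = s$ (so $k = 1$) while $\Lambda$ even makes $\Lambda - s + 1 = s + 1$ (so $k$ may equal $2$), together with correctly identifying which summands actually depend on $\mathsf W_s$ via $m_r$. Once these case distinctions are systematized, the sign calculation in each summand is mechanical and summing non-negative quantities completes the proof.
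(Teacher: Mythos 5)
Your proposal is essentially the paper's proof in different notation. The paper rewrites each summand of \eqref{derivative_ft1}--\eqref{derivative_ft2} as $1/\bigl(\sqrt{M_{r,t+1}}+\tfrac{1}{L_C}\tfrac{\mathsf W_t}{\sqrt{M_{r,t+1}}}\bigr)^2$ with $M_{r,t+1}=\sigma^2+\tfrac{1}{L_C}\sum_{i=t+1}^{r}\mathsf W_i$ and argues that each such term is increasing in $M_{r,t+1}$ below the critical point, then bounds the largest $M_{r,t+1}$ by the left side of \eqref{ub_geq_sqrt}; your quotient-rule computation $\partial T_r/\partial A_r=(\alpha_r-A_r)/(A_r+\alpha_r)^3$ is exactly the same monotonicity statement, and your reduction to one coordinate at a time matches the paper's implicit use of the product structure of the rectangle. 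The one substantive point is your bridge from the hypothesis $A_r\le\sqrt{\mathsf W_t/L_C}$ to the condition you actually need, $A_r\le \mathsf W_t/L_C$: you invoke ``$\mathsf W_t\ge L_C$ in the operating regime,'' which is not among the lemma's hypotheses, so as written your argument does not prove the lemma for all parameters it claims to cover. Note, however, that \eqref{ub_geq_sqrt} itself forces $\sigma^2\le\sqrt{\mathsf W_t/L_C}$, i.e.\ $\mathsf W_t\ge L_C\sigma^4$, so your extra assumption is automatic whenever $\sigma\ge1$ (as in the paper's numerical setting) but not for $\sigma<1$; you should either derive $\mathsf W_t\ge L_C$ from the stated hypotheses or flag it as an additional requirement. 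To your credit, you have surfaced a point on which the paper's own proof is equally loose: its observation (i) asserts monotonicity of each summand on $M_{r,t+1}\in[0,\sqrt{\mathsf W_t/L_C}]$, whereas the true monotone interval is $[0,\mathsf W_t/L_C]$, and these nest only when $\mathsf W_t\ge L_C$.
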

\begin{proof}
Appendix~\ref{pf_lemma_Mrt}.
\end{proof}

We introduce notations that will be used in the following proof. Fixing $\{\mathsf W_{i}^{(V)}\}_{i=t+1}^\theta$, we denote the derivative of $\mathsf f_t$ with respect to $\mathsf W_t$ at $\mathsf W_t = \bar {\mathsf W}_t$ by
\begin{align}
\mathsf f_t'(\bar {\mathsf W}_t)\triangleq \frac{\partial \mathsf f_t(\mathsf W_t, \{\mathsf W_{i}^{(V)}\}_{i=t+1}^\theta)}{\partial \mathsf W_t}\Big|_{\mathsf W_t = \bar {\mathsf W}_t}.
\end{align}
Given a sequence of positive numbers $\{\gamma_i\}_{i=t}^{\theta}$, we denote by $K_s^{(t)}$ a positive number that ensures
\begin{align}\nonumber
&\mathsf f_t\left(\{\mathsf W_i^{(V)}\}_{i=t}^{s-1},\{\mathsf W_i^{(V)}+\gamma_i\}_{i=s}^\theta\right) \\ \nonumber
-~& \mathsf f_t\left(\{\mathsf W_i^{(V)}\}_{i=t}^{s},\{\mathsf W_i^{(V)}+\gamma_i\}_{i=s+1}^\theta\right)  \\ \label{KST}
\geq~& -K_s^{(t)}\gamma_s,
\end{align}
for $s\geq t+1$.
Such $K_s^{(t)}$ always exists since $\mathsf f_t$ is continuously differentiable with respect to $\mathsf W_s$, i.e., it is a Lipschitz function of $\mathsf W_s$, and it decreases as $\mathsf W_s$ increases by Lemma~\ref{lemma_ftWs}.
Given $\{\bar {\mathsf W}_t\}_{t=1}^\theta$, we define sequence $\{g_t\}_{t=1}^{L_C}$,
\begin{subequations}
\begin{align}
&g_\theta\triangleq 1,\\ \label{gt_def}
&g_t \triangleq \frac{\sum_{s=t+1}^\theta K_s^{(t)}g_s}{\mathsf f_t'(\bar {\mathsf W}_t)},~t\leq \theta,\\
&g_t \triangleq g_{\Lambda-t+1}, ~t>\theta.
\end{align}
\end{subequations}
Given $\{\bar {\mathsf W}_t\}_{t=1}^{\theta-1}$ and an arbitrary positive number $\gamma_{\theta}^{(\theta)}$, we define a non-increasing sequence $\gamma_\theta^{(t)}$, $t=1,2,\dots,\theta-1$, as
\begin{align}\label{g_t_t}
\gamma_\theta^{(t)} \triangleq \min\left\{\gamma_\theta^{(t+1)},\frac{1}{g_t}(\bar {\mathsf W}_t - \mathsf W_t^{(V)})\right\}.
\end{align}
We denote
\begin{align}\label{g_t_m}
\gamma_{\theta}^{\max} \triangleq \max\left\{\gamma_\theta>0\colon \frac{\omega}{L_CL_R}\sum_{t=1}^{L_C}g_t\gamma_\theta = P - P_V(R)\right\}.
\end{align}
We define the minimum of \eqref{g_t_t} and \eqref{g_t_m} as
\begin{align}\label{g_th_s}
\gamma_{\theta}^* \triangleq \min\{\gamma_\theta^{(1)}, \gamma_\theta^{\max}\}. 
\end{align}
We define a sequence of numbers $\{\gamma_t^*\}_{t=1}^{\theta-1}$ as
\begin{align}\label{g_t_s}
\gamma_t^* \triangleq g_t\gamma_\theta^*.
\end{align}

\subsubsection{Proof of `If' direction}\label{Appen_if}
 We show that if $P>P_V(R)$ and $R<\bar R$, then VPA does not declare a failure, equivalently, there exists a sequence $\{\mathsf W_c\}_{c\in[L_C]}$ that satisfies (lines 1--5 and line 9 of Algorithm~\ref{VPA}):
\begin{align}\label{cons_1}
&\mathsf f_t(\{\mathsf W_i\}_{i=t}^{\theta})>2RL_R,\forall t=1,2,\dots,\theta,\\ \label{cons_2}
&\frac{\omega}{L_RL_C}\sum_{c=1}^{L_C}\mathsf W_c \leq P.
\end{align}

For $P>P_V(R)$ and $R<\bar R$, we set
\begin{align}\label{WPA}
\mathsf W_t = \mathsf W_t^{(V)} + \gamma_t^*,~t=1,2,\dots,\theta,
\end{align}
where $\mathsf W_t^{(V)}$ exists due to Lemma~\ref{lemma_R_exists}; $\gamma_t^*$ is defined in \eqref{g_t_s}; $\gamma_\theta^{(\theta)}$ defining $\gamma_\theta^*$ \eqref{g_th_s} via \eqref{g_t_t} is an arbitrary positive number; the sequence $\{\bar {\mathsf W}_t\}_{t=1}^{\theta-1}$ defining $\gamma_\theta^*$ \eqref{g_th_s} via \eqref{g_t_t} is chosen to be large enough so that
\begin{align}\label{WtWtV}
\bar {\mathsf W}_t > \mathsf W_t^{(V)},~\forall t=1,\dots,\theta-1,
\end{align} 
and that $\bar {\mathsf W}_t$ satisfies \eqref{ub_geq_sqrt} with $\mathsf W_t\leftarrow\bar {\mathsf W}_t$ and $b_i \leftarrow \mathsf W_i^{(V)} + g_i\gamma_\theta^{(t+1)}$, $i=t+1,\dots,\theta$.

We show that the power allocation in \eqref{WPA} satisfies \eqref{cons_1}--\eqref{cons_2}, respectively. The power allocation \eqref{WPA} satisfies the power constraint \eqref{cons_2} due to \eqref{g_t_m}--\eqref{g_th_s}. To show that the power allocation satisfies \eqref{cons_1}, it suffices to show that the following statement:
\begin{align}\label{statement}
\text{If}~0<\gamma_\theta^*\leq \gamma_\theta^{(t)}, ~\text{then \eqref{cons_1} holds at iteration}~ t.
\end{align}
 Since $0<\gamma_\theta^*\leq \gamma_{\theta}^{(t)}$ for all $t=1,2,\dots,\theta$ by \eqref{g_th_s}, this statement allows us to conclude that
 the condition \eqref{cons_1} is satisfied for all $t=1,2,\dots,\theta$.

It remains to prove the statement \eqref{statement}. The statement trivially holds for $t=\theta$, since $\mathsf W_\theta>\mathsf W_{\theta}^{(V)}$ ensures \eqref{cons_1} according to Lemma~\ref{lemma_ftWt}. We proceed to prove the statement for $t\leq \theta-1$.
Taking the difference between two $\mathsf f_t$ with different $\mathsf W_t$, we obtain
\begin{subequations}\label{Ft-ft}
\begin{align}\nonumber
&\mathsf f_t\left(\{\mathsf W_i^{(V)}+\gamma_i^*\}_{i=t}^\theta\right) - \mathsf f_t\left(\mathsf W_t^{(V)}, \{\mathsf W_i^{(V)}+\gamma_i^*\}_{i=t+1}^\theta\right)\\ \label{Ft-fta}
>~& \frac{\partial \mathsf f_t(\mathsf W_t, \{\mathsf W_{i}^{(V)}+\gamma_i^*\}_{i=t+1}^\theta)}{\partial \mathsf W_t}\Big|_{\mathsf W_t =\mathsf W_t^{(V)}+\gamma_t^*}\gamma_t^*\\\label{Ft-ftb}
\geq ~&  \frac{\partial \mathsf f_t(\mathsf W_t, \{\mathsf W_{i}^{(V)}+\gamma_i^*\}_{i=t+1}^\theta)}{\partial \mathsf W_t}\Big|_{\mathsf W_t =\bar{\mathsf W}_t} \gamma_t^*\\\label{Ft-ftc}
\geq ~& \mathsf f'_t(\bar {\mathsf W}_t) \gamma_t^*,
\end{align}
\end{subequations}
where \eqref{Ft-fta} holds by Mean Value Theorem and by Lemma~\ref{lemma_dftWt}; \eqref{Ft-ftb} holds due to $\gamma_t^* = g_t\gamma_\theta^*\leq g_t\gamma_\theta^{(t)}\leq \bar{\mathsf W}_t-\mathsf W_t^{(V)}$ and Lemma~\ref{lemma_dftWt}; \eqref{Ft-ftc} holds due to the fact that $\gamma_i^* = g_i\gamma_\theta^* \leq g_i\gamma_\theta^{(t)}\leq g_i\gamma_\theta^{(t+1)}$ for all $t+1\leq i \leq \theta$, the choice of $\bar {\mathsf W}_t$ below \eqref{WtWtV}, and Lemma~\ref{lemma_Mrt}.
We then take the difference between two $\mathsf f_t$ with different $\mathsf W_s$, $s\geq t+1$, just like \eqref{KST} with $\gamma_i\leftarrow \gamma_i^*$, $\forall i=s,\dots,\theta$. Summing \eqref{Ft-ft} and \eqref{KST} for all $s = t+1,t+2,\dots,\theta$, we obtain
\begin{subequations}\label{ftft_overall}
\begin{align}\label{ftft_overall_a}
&\mathsf f_t\left(\{\mathsf W_i^{(V)}+\gamma_i^*\}_{i=t}^\theta\right) - \mathsf f_t\left(\{\mathsf W_i^{(V)}\}_{i=t}^\theta\right) \\\label{ftft_overall_b}
>~& \mathsf f'_t(\bar {\mathsf W}_t) \gamma_t^* - \sum_{s=t+1}^\theta K_s^{(t)}\gamma_s^*\\\label{ftft_overall_c}
=~& \mathsf f'_t(\bar {\mathsf W}_t) g_t\gamma_\theta^* - \sum_{s=t+1}^\theta K_s^{(t)}g_s\gamma_\theta^*\\\label{ftft_overall_d}
= ~& 0,
\end{align}
\end{subequations}
where \eqref{ftft_overall_c} holds by plugging \eqref{g_t_s} into \eqref{ftft_overall_b}, and \eqref{ftft_overall_d} holds by plugging \eqref{gt_def} into \eqref{ftft_overall_c}. Since the second term in \eqref{ftft_overall_a} is equal to $2RL_R$, we conclude that \eqref{cons_1} holds at iteration $t$ with the power allocation in \eqref{WPA}.

\subsubsection{Proof of `Only if' direction}\label{Appen_onlyif}
Given $P<\infty$ and $R<\infty$, we show that if VPA does not declare failure, then $P>P_V(R)$ and $R<\bar R$. 

Not declaring failure implies that at the end of line 5 of Algorithm~\ref{VPA}, VPA forms finite $\{\mathsf W_t\}_{t=1}^\theta$ that ensure $\mathsf f_t>2RL_R$ for all $t=1,2,\dots,\theta$. We show by mathematical induction that there exist $\{\mathsf W_t^{(V)}\}_{t=1}^{\theta}$ that satisfy
\begin{subequations}\label{WtWtVlarge}
\begin{align}
&\mathsf f_t\left(\{\mathsf W_i^{(V)}\}_{i=t}^\theta\right) = 2RL_R\\\label{WtWtVlarge_b}
&\mathsf W_t^{(V)}<\mathsf W_t,t=1,2,\dots,\theta.
\end{align}
\end{subequations}
for all $1\leq t\leq \theta$ and for any $\{\mathsf W_t\}_{t=1}^\theta$ yielded by VPA.

\textbf{Initial step}: Since $\mathsf f_\theta(\mathsf W_\theta)>2RL_R$, $\mathsf f_\theta(0)=0$, and $\mathsf f_t$ is continuously increasing by Lemma~\ref{lemma_ftWt}, we conclude that there exists $\mathsf W_\theta^{(V)}$ that satisfies \eqref{WtWtVlarge} at $t=\theta$.

\textbf{Induction step}: Assuming that there exist $\{\mathsf W_t^{(V)}\}_{t=s+1}^{\theta}$ that satisfy \eqref{WtWtVlarge} for $t=s+1,\dots,\theta$, we show that together with $\{\mathsf W_t^{(V)}\}_{t=s+1}^{\theta}$, there exists $\mathsf W_s^{(V)}$ that satisfies \eqref{WtWtVlarge} at $t=s$. From Lemma~\ref{lemma_ftWs} and the induction assumption, we conclude 
\begin{align}
\mathsf f_s\left(\mathsf W_s,\{\mathsf W_t^{(V)}\}_{t=s+1}^{\theta}\right)>\mathsf f_s\left(\{\mathsf W_i\}_{i=s}^\theta\right).
\end{align}
Since $\mathsf f_s(\{\mathsf W_i\}_{i=s}^\theta)>2RL_R$, $\mathsf f_s(0,\{\mathsf W_t^{(V)}\}_{t=s+1}^{\theta})=0$, and $\mathsf f_s$ is continuously increasing in $\mathsf W_s$ by Lemma~\ref{lemma_ftWt}, there exists $\mathsf W_s^{(V)}$ that satisfies \eqref{WtWtVlarge}.

The existence of $\{\mathsf W_t^{(V)}\}_{t=1}^{\theta}$ satisfying \eqref{WtWtVlarge} implies $P>P_V(R)$ due to \eqref{WtWtVlarge_b} and implies $R<\bar R$ due to Lemma~\ref{lemma_R_exists}.

\subsubsection{Proof of Lemma~\ref{lemma_R_exists}}\label{pf_lemma_R_exists}
Fixing $\mathsf W_s=0$ for all $s\geq t+1$, we denote 
\begin{align}
R_t \triangleq \frac{1}{2L_R}\lim_{\mathsf W_t\rightarrow\infty}\mathsf f_t(\mathsf W_t,0,0,\dots,0).
\end{align}
Before we prove Lemma~\ref{lemma_R_exists}, we show 
\begin{align}\label{RRt}
\bar R = \min\{R_t, t=1,2,\dots,\theta\}.
\end{align}
 For $2t<\Lambda-\omega + 2$,
\begin{align}\label{Rt_ep1}
R_t = \frac{\omega L_C}{2L_R};
\end{align}
for $\Lambda-\omega+2<2t<\Lambda+1$, 
\begin{align}\label{Rt_ep2}
R_t = \frac{1}{2L_R}\left((\Lambda-2t+1)L_C + (2t+\omega-\Lambda-1)\frac{L_C}{2}\right);
\end{align}
for $2t = \Lambda+1$,
\begin{align}\label{Rt_ep3}
R_t = \frac{\omega L_C}{2L_R}.
\end{align}
If $\Lambda$ is even, $2\theta<\Lambda+1$, the first two cases \eqref{Rt_ep1}--\eqref{Rt_ep2} describe $R_t$ for all $t=1,2,\dots,\theta$, and the minimum in \eqref{RRt} is achieved at $2t = \Lambda$, yielding
\begin{align}
\bar R = R_{\frac{\Lambda}{2}}= \frac{L_C(\omega+1)}{4L_R}.
\end{align}
If $\Lambda$ is odd, $2\theta = \Lambda+1$, the three cases \eqref{Rt_ep1}--\eqref{Rt_ep3} jointly describe $R_t$ for all $t=1,2,\dots,\theta$, and the minimum in \eqref{RRt} is achieved at $2t = \Lambda-1$, yielding
\begin{align}
\bar R = R_{\frac{\Lambda-1}{2}} = \frac{L_C(\omega+2)}{4L_R}.
\end{align}

We begin to prove Lemma~\ref{lemma_R_exists}.

We show that if $R<\bar R$, there exist $\mathsf W_1^{(V)},\dots,\mathsf W_\theta^{(V)}<\infty$ that satisfy $\mathsf f_t =2RL_R$ for all $t=1,2,\dots,\theta$. We prove this by mathematical induction. 

\textbf{Initial step}: Since $R<\bar R \leq R_{\theta}$ and $\mathsf f_{\theta}\in[0,2R_{\theta}L_R)$ is continuously increasing in $\mathsf W_{\theta}$, there exists $\mathsf W_\theta^{(V)}<\infty$ that satisfies $\mathsf f_\theta(\mathsf W_\theta^{(V)})=2RL_R$. 

\textbf{Induction step}: Assuming there exist $\mathsf W_{t+1}^{(V)},\dots,\mathsf W_\theta^{(V)}<\infty$ that satisfy $\mathsf f_i=2RL_R$ for all $i=t+1,\dots,\theta$, we show that together with $\{\mathsf W_i^{(V)}\}_{i=t+1}^\theta$, there exists $\mathsf W_t^{(V)}<\infty$ that satisfies $\mathsf f_t=2RL_R$. Since $\{\mathsf W_i^{(V)}\}_{i=t+1}^\theta$ are finite by the induction assumption, it holds that
\begin{align}
\lim_{\mathsf W_t\rightarrow\infty}\mathsf f_t\left(\mathsf W_t, \{\mathsf W_i^{(V)}\}_{i=t+1}^\theta\right)= 2R_tL_R
\end{align}
Since $R<\bar R\leq R_t$ and $\mathsf f_{t}\in[0,2R_tL_R)$ is continuously increasing in $\mathsf W_t\in[0,\infty)$, there exists $\mathsf W_t<\infty$ that achieves $\mathsf f_t=2RL_R$.

We show that if there exist $\mathsf W_1^{(V)},\mathsf W_2^{(V)},\dots,\mathsf W_\theta^{(V)}<\infty$ that satisfy $\mathsf f_t=2RL_R$ for all $t=1,2,\dots,\theta$, then the rate satisfies $R<\bar R$. It holds that
\begin{subequations}\label{RLRft}
\begin{align}\label{RLRft_a}
2RL_R &= \mathsf f_t\left(\mathsf W_t^{(V)},\mathsf W_{t+1}^{(V)},\dots,\mathsf W_\theta^{(V)}\right)\\\label{RLRft_b}
&<\lim_{\mathsf W_t\rightarrow\infty}\mathsf f_t\left(\mathsf W_t,\mathsf W_{t+1}^{(V)},\dots,\mathsf W_\theta^{(V)}\right)\\\label{RLRft_c}
& = 2R_tL_R,
\end{align}
\end{subequations}
where \eqref{RLRft_b} is by Lemma~\ref{lemma_ftWt}; \eqref{RLRft_c} holds since $\{\mathsf W_i^{(V)}\}_{i=t+1}^\theta$ are finite. Since \eqref{RLRft} holds for all $t=1,2,\dots,\theta$, we conclude $R<\bar R$, where $\bar R$ is defined in \eqref{RRt}.

\subsubsection{Proof of Lemma~\ref{lemma_dftWt}}\label{pf_lemma_dftWt}
Since $\mathsf W_t$ only appears in the denominator of $\frac{\partial \mathsf f_t}{\partial W_t}$ \eqref{derivative_ft1}--\eqref{derivative_ft2} as a summand, the increase of $\mathsf W_t$ leads to the decrease of $\frac{\partial \mathsf f_t}{\partial W_t}$.

\subsubsection{Proof of Lemma~\ref{lemma_ftWs}}\label{pf_lemma_ftWs}
We show that the derivative of $\mathsf f_t$ with respect to $\mathsf W_s$ for $s\geq t+1$ is negative. For $\Lambda-s+1>\min\{r,\Lambda-t+1\}$,
\begin{align}
\frac{\partial \mathsf f_t}{\partial \mathsf W_s} = \sum_{r=t}^{t+\omega-1}\frac{-\mathsf W_t\frac{1}{L_C}}{\left(\sigma^2 + \frac{1}{L_C}\sum_{c'=t}^{\min\{r,\Lambda-t+1\}}\mathsf W_{c'}\right)^2}<0.
\end{align}
For $\Lambda-s+1\leq \min\{r,\Lambda-t+1\}$,
\begin{align}
\frac{\partial \mathsf f_t}{\partial \mathsf W_s} = \sum_{r=t}^{t+\omega-1}\frac{-\mathsf W_t\frac{2}{L_C}}{\left(\sigma^2 + \frac{1}{L_C}\sum_{c'=t}^{\min\{r,\Lambda-t+1\}}\mathsf W_{c'}\right)^2}<0.
\end{align}

\subsubsection{Proof of Lemma~\ref{lemma_Mrt}}\label{pf_lemma_Mrt}
We denote by  $M_{r,t+1} \triangleq \sigma^2 + \frac{1}{L_C}\sum_{i=t+1}^{r}\mathsf W_{i}$, $t+1\leq r\leq \Lambda-t$, and we rewrite $\frac{\partial\mathsf f_t}{\partial \mathsf W_t}$ in \eqref{derivative_ft1}--\eqref{derivative_ft2} as follows. If $t+\omega-1< \Lambda-t+1$,
\begin{subequations}\label{ftM1}
\begin{align}\label{ftM1a}
\frac{\partial \mathsf f_t}{\partial \mathsf W_t} &=\frac{\sigma^2}{(\sigma^2 + \frac{1}{L_C}\mathsf W_t)^2}\\  \label{ftM1b}
 &+ \sum_{r=t+1}^{t+\omega-1}\frac{1}{(\sqrt{M_{r,t+1}}+\frac{1}{L_C}\frac{\mathsf W_t}{\sqrt{M_{r,t+1}}})^2};
\end{align}
\end{subequations}
if $t+\omega-1\geq \Lambda-t+1$,
\begin{subequations}\label{ftM2}
\begin{align}\label{ftM20}
&\frac{\partial \mathsf f_t}{\partial \mathsf W_t}= \frac{\sigma^2}{(\sigma^2 + \frac{1}{L_C}\mathsf W_t)^2} \\\label{ftM2_a}
&+ \sum_{r=t+1}^{\Lambda-t}\frac{1}{(\sqrt{M_{r,t+1}}+\frac{1}{L_C}\frac{\mathsf W_t}{\sqrt{M_{r,t+1}}})^2}\\\label{ftM2_b}
&+ (2t+\omega-\Lambda-1)\frac{1}{(\sqrt{M_{\Lambda-t,t}}+\frac{2}{L_C}\frac{\mathsf W_t}{\sqrt{M_{\Lambda-t,t+1}}})^2}.
\end{align} 
\end{subequations}
We observe that (i) each summand in \eqref{ftM1b} and \eqref{ftM2_a} monotonically increases as $M_{r,t+1}$ increases on $M_{r,t+1}\in\left[0,\sqrt{\frac{\mathsf W_t}{L_C}}\right]$; (ii) \eqref{ftM2_b} increases as $M_{\Lambda-t,t+1}$ increases on $M_{\Lambda-t,t+1}\in\left[0,\sqrt{\frac{2\mathsf W_t}{L_C}}\right]$; (iii) $M_{r,t+1}$ increases as $r$ increases.

Since \eqref{ub_geq_sqrt} means $M_{\Lambda-t, t+1} \leq \sqrt{\frac{\mathsf W_t}{L_C}}$, observation (iii) implies that $\{\mathsf W_i\}_{i=t+1}^{\theta}$ satisfy $M_{r,t+1}\leq \sqrt{\frac{\mathsf W_t}{L_C}}$ for all $t+1\leq r\leq \Lambda-t$. Thus, observations (i)--(ii) imply that $\frac{\partial \mathsf f_t}{\partial \mathsf W_t}$ \eqref{ftM1}--\eqref{ftM2} is non-decreasing as the elements in any non-empty subset of $\{\mathsf W_i\}_{i=t+1}^{\theta}$ increase on their corresponding intervals.

\subsection{Proof of Theorem~\ref{thm_VPA}: step (ii)} \label{pf_thm2_step2} 
We show that the output power allocation of VPA ensures successful decoding.
The power determined at the end of line 5 of Algorithm~\ref{VPA} satisfies 
\begin{align}\label{ft_all}
\mathsf f_t\left(\{\mathsf W_i\}_{i=t}^{\theta}\right)  > 2RL_R,\forall t=1,\dots,\theta,
\end{align}
since $\delta_t>0$ for all $t=1,2,\dots,\theta$ and Lemma~\ref{lemma_ftWt}, which states that $\mathsf f_t$ increases as $\mathsf W_t$ increases. 
Plugging \eqref{ft_all} into Lemma~\ref{lemma} item~\ref{item2}), we conclude $\psi_c^{\theta}=0$, $\forall c\in[L_C]$, meaning that VPA ensures successful decoding within $\theta$ iterations. 

Since the left sides of the inequalites in lines 6 and 9 are equal to the left side of \eqref{power_constraint}, representing the resultant power, lines 6 and 9 check the satisfaction of the power constraint \eqref{power_constraint}. After transferring the resiudal power to $\mathsf W_1$ and $\mathsf W_{\Lambda}$ in lines 9--13, the resultant power still satisfies \eqref{ft_all} since $\mathsf f_{t}, t\geq 2$ does not depend on $\mathsf W_1$ and $\mathsf W_{\Lambda}$, and $\mathsf f_1$ by Lemma~\ref{lemma_ftWt} monotonically increases as $\mathsf W_1$ increases.

\subsection{Proof of Proposition~\ref{prop_comp}}\label{pf_prop_comp}
We first show \eqref{rset}. It suffices to show that the upper bound on $R$ in \eqref{PURF} is smaller than or equal to that in \eqref{PVRF}. For clarity, we denote the upper bounds on $R$ in \eqref{PURF} and \eqref{PVRF} by $\bar R_U$ and $\bar R_V$, respectively. We upper bound $\bar R_U$ as
\begin{subequations}\label{barRU}
\begin{align}\label{barRUa}
\bar R_U &= \frac{L_C}{2L_R}\sum_{r=1}^{\omega}\frac{1}{r}\\\label{barRUb}
& \leq \frac{L_C}{2L_R}\left(1+\frac{\omega-1}{2}\right)\\\label{barRUc}
& = \frac{L_C(\omega+1)}{4L_R}\\
&\leq \bar R_V
\end{align}
\end{subequations}
where \eqref{barRUb} holds by lower bounding $r$ by $2$ for all $r\geq 2$. 

Given rate $R$ that ensures $P_U(R)<\infty$ and $P_V(R)<\infty$, we proceed to show \eqref{PVPU}. We denote by $\bar {\mathsf W}  \triangleq P_U(R)\frac{L_R}{\omega}$ the UPA at $P_U(R)$. To show \eqref{PVPU}, it suffices to show 
\begin{align}\label{WtVW}
\mathsf W_t^{(V)}\leq \bar {\mathsf W}.
\end{align}
To this end, from \eqref{psi_f11} and \eqref{pf_thm_UPA_2}, we conclude
\begin{align}\label{f12RLR}
\mathsf f_1(\bar {\mathsf W},\bar {\mathsf W},\dots,\bar {\mathsf W}) = 2RL_R.
\end{align}
Lemma~\ref{lemma_D} implies for all $t=2,3,\dots,\theta$,
\begin{align}\label{ftgeqR}
\mathsf f_t(\bar {\mathsf W},\bar {\mathsf W},\dots,\bar {\mathsf W})\geq 2RL_R.
\end{align}
At $t=\theta$, Lemma~\ref{lemma_ftWt} and \eqref{ftgeqR} imply \eqref{WtVW}.
At $t=\theta-1$, since Lemma~\ref{lemma_ftWs} implies $\mathsf f_{\theta-1}(\bar{\mathsf W},\mathsf W_\theta^{(V)}) \geq \mathsf f_{\theta-1}(\bar{\mathsf W},\bar{\mathsf W})$,
we conclude from Lemma~\ref{lemma_ftWt} and \eqref{ftgeqR} that \eqref{WtVW} holds at $t=\theta-1$.
Simiarly, at $t=\theta-2,\theta-3,\dots,1$, iteratively using Lemmas~\ref{lemma_ftWt}~and~\ref{lemma_ftWs} and \eqref{f12RLR}--\eqref{ftgeqR}, we obtain \eqref{WtVW}. 

\subsection{UPA is a special case of VPA}\label{UPA_sc_VPA}

We show that UPA is a special case of VPA. This is an alternative proof for \eqref{PVPU}. Consider any $P>P_U(R)$ and $R<\bar R_U$ with UPA $\bar{\mathsf W}' \triangleq P\frac{L_R}{\omega}$. Due to the fact that $\mathsf f_1$ in \eqref{f12RLR} increases as $\bar {\mathsf W}$ increases, $\bar {\mathsf W}'>\bar {\mathsf W}$, and Lemma~\ref{lemma_D}, we conclude $\mathsf f_t(\bar{\mathsf W}',\bar{\mathsf W}',\dots,\bar{\mathsf W}')>2RL_R$ for all $t=1,2,\dots,\theta$. VPA recovers UPA by choosing $\delta_t = \bar{\mathsf W}' - \mathsf W_t$, where $\mathsf W_t$ is the output of line 2 of Algorithm~\ref{VPA}. The difference $\delta_t$ is positive for all $t=1,2,\dots,\theta$ since Lemma~\ref{lemma_ftWt} implies that the output of line 2 satisfies $\mathsf W_t<\bar{\mathsf W}'$.

\end{document}